\newtheorem{proposition}{Proposition}[section]
\newtheorem{lemma}{Lemma}[section]
\newcommand{\R}{\mathbb{R}}
\newcommand{\E}{\mathbb{E}}
\newcommand{\argmin}{\mathop{\arg \min}\limits}
\newcommand{\argmax}{\mathop{\arg \max}\limits}
\newcommand{\calN}{\mathcal{N}}
\begin{document}

\title{Missing values: sparse inverse covariance estimation and an
  extension to sparse regression}

\author{Nicolas St\"adler and Peter B\"uhlmann\\
\small Seminar for Statistics, ETH Zurich\\[-0.8ex]
\small \small CH-8092 Zurich, Switzerland.\\
\small \texttt{staedler@stat.math.ethz.ch, buhlmann@stat.math.ethz.ch}}

\date{} 

\maketitle

\begin{abstract}
We propose an $\ell_{1}$-regularized likelihood meth\-od for estimating the
inverse covariance matrix in the high-dimensional multivariate normal model
in presence of missing data. Our method is based on the assumption that the
data are missing at random (MAR) which entails also the completely missing
at random case. The implementation
of the method is non-trivial as the observed negative log-likelihood generally is a
complicated and non-convex function. We propose an efficient EM algorithm
for optimization with provable numerical convergence
properties. Furthermore, we extend the methodology to handle missing values
in a sparse regression context. We demonstrate both methods on simulated and real data.\vspace{0.5cm}\\
{\bf Keywords} {Gaussian graphical model, Lasso, Missing data, EM algorithm, Two-stage likelihood}
\vspace{0.5cm}\\
{\bf This is the author’s version of the work (published in Statistics and Computing, 2012, Volume 22, 219-235). The final publication is available at www.springerlink.com.} 
\end{abstract}

\section{Introduction} 
The most common probability model for continuous multivariate data is the
multivariate normal distribution. Many standard methods for analyzing multivariate data, including factor
analysis, principal components and discriminant analysis, are directly
based on the sample mean and covariance matrix of the data.

Another important application are Gaussian graphical models where
conditional dependencies among the variables are entailed in the inverse of
the covariance matrix \citep{lauritzen96graphical}. In particular, the
inverse covariance matrix and its estimate should be sparse having some
entries equaling zero since these encode conditional independencies.
In the context of high-di\-men\-sional data where the number of variables $p$
is much larger than sample size $n$, \cite{meinshausen04consistent}
estimate a sparse Gaussian model by pursuing many $\ell_{1}$-penalized
regressions for every node in the graph and they prove that the procedure
can asymptotically recover the true graph. Later, other authors proposed
algorithms for the exact optimization of the
$\ell_{1}$-penalized log-likelihood (\cite{yuan05model},
\cite{friedman2007sic}, \cite{banerjee2008mst} and
\cite{rothman2008spi}). It has been shown in \cite{wainwright} that such an approach is also
able to recover asymptotically the true graph, but \cite{meinshausen05}
points out that rather restrictive conditions on the true covariance matrix
are necessary. All these approaches and
theoretical analyses have so far been developed for the case where all data
is observed. 

However, datasets often suffer from missing values \citep{LittleRubin}. 
Besides many ad-hoc ap\-proach\-es to the missing-value problem, there is a systematic
approach based on likelihoods which is very popular nowadays (\cite{LittleRubin},
\cite{Schafer}). But even estimation of mean values and covariance matrices becomes
difficult when the data is incomplete and no explicit maximization of the
likelihood is possible. A solution addressing this problem is given by the
EM algorithm for solving missing-data problems based on likelihoods. 

In this article we are interested in estimating the (inverse) covariance
matrix and the mean vector in the high-dimensional multivariate normal
model in presence of missing data, and this in turn allows for imputation. We present a new algorithm for maximizing the $\ell_{1}$-penalized
observed log-likelihood. The proposed method can be used to estimate sparse undirected
graphical models or/and regularized covariance matrices for
high-dimensional data where $p\gg n$. Furthermore, once having a regularized
covariance estimation for the incomplete data at hand, we show how to
do $\ell_1$-penalized regression, when there is an additional response
variable which is regressed on the incomplete data. 

\section{$\ell_{1}$-regularized inverse covariance estimation with missing data}

\subsection{GLasso}\label{subsec:glasso}
Let  $(X^{(1)},\ldots,X^{(p)})$ be Gaussian distributed with
mean $\mu$ and covariance $\Sigma$, i.e., $\calN(\mu,\Sigma)$. We wish to estimate the concentration
matrix $K=\Sigma^{-1}$. Given a complete random sample
$\mathbf{x}=(x_{1},\ldots,x_{n})^T$, \cite{yuan05model} propose to minimize the negative $\ell_{1}$-penalized log-likelihood 
\begin{align}\label{eq:plikcomplete}
&-\ell(\mu,K;\mathbf{x})+\lambda\|K\|_{1}=-\frac{n}{2}\log|K|+\frac{1}{2}\sum_{i=1}^{n}(x_{i}-\mu)^TK(x_{i}-\mu)+\lambda\|K\|_{1},
\end{align}
over non-negative definite matrices $K$ ($K\succ0$), where $\|K\|_{1}=\sum_{j,j'=1}^{p}|K_{jj'}|$. Here $\lambda>0$ is a tuning parameter.

The minimizer $\hat{K}$ is easily seen to satisfy
\begin{eqnarray}\label{eq:glasso}
\hat{K}&=&\argmin_{K\succ0}\big(-\log|K|+\mathrm{tr}(KS)+\rho\|K\|_{1}\big)
\end{eqnarray}
where $S=\frac{1}{n}\sum_{i=1}^{n}(x_{i}-\bar{x})(x_{i}-\bar{x})^T$ and
$\rho=\frac{2\lambda}{n}$.

\cite{friedman2007sic} propose an elegant and efficient algorithm, called
GLasso, to solve the problem (\ref{eq:glasso}). We briefly review the
derivation of their algorithm while details are given in \cite{friedman2007sic} and \cite{banerjee2008mst}. We will make use of this algorithm in the
M-Step of an EM algorithm in a missing data setup, described in Section~\ref{subsec:emalg}.

Using duality, formula (\ref{eq:glasso}) is seen to be equivalent to the maximization problem
\begin{equation}\label{eq:glasso2}
\hat{\Sigma}=\argmax_{\|\Sigma-S\|_{\infty}\leq\rho}\log \det(\Sigma).
\end{equation}
Problem (\ref{eq:glasso2}) can be solved by a block coordinate descent
optimization over each row and corresponding column of $\Sigma$. Partitioning
$\Sigma$ and $S$
\begin{eqnarray*}
\Sigma=\left(\begin{array}{cc}
\Sigma_{11}&\sigma_{12}\\
\sigma_{12}^{T}&\sigma_{22}\end{array}\right),\qquad
S=\left(\begin{array}{cc}
S_{11}&s_{12}\\
s_{12}^{T}&s_{22}\end{array}\right)
\end{eqnarray*}
the block solution for the last column $\sigma_{12}$ satisfies
\begin{equation}\label{eq:glasso3}
\hat{\sigma}_{12}=\argmin_{y: \|(y-s_{12})\|_{\infty}\leq\rho}y^{T}\Sigma_{11}^{-1}y.
\end{equation}
Using duality it can be seen that solving (\ref{eq:glasso3}) is equivalent
to the Lasso problem 
\begin{equation}\label{eq:glasso4}
\hat{\beta}=\argmin_{\beta}\Big(\|\frac{1}{2}\Sigma_{11}^{1/2}\beta-\Sigma_{11}^{-1/2}s_{12}\|_{2}^{2}+\rho\|\beta\|_{1}\Big)
\end{equation}
where $\hat{\sigma}_{12}$ and $\hat{\beta}$ are linked through
$\hat{\sigma}_{12}=\Sigma_{11}\hat{\beta}/2$. Permuting rows and columns so
that the target column is always the last, a Lasso problem like
(\ref{eq:glasso4}) is solved for each column, updating their estimate of $\Sigma$
after each stage. Fast coordinate descent algorithms for the Lasso
\citep{friedman07fastlasso} make this approach very attractive. Although the algorithm solves for $\Sigma$, the
corresponding estimate of $K$ can be recovered cheaply.

\subsection{MissGLasso}\label{sec:missglasso}
We turn now to the situation where some variables are
missing (i.e., not observed).

As before, we assume $(X^{(1)},\ldots,X^{(p)})\sim\calN(\mu,\Sigma)$ to be p-variate normally distributed with
mean $\mu$ and covariance $\Sigma$. We then write $\mathbf{x}=(\mathbf{x}_{\mathrm{obs}},\mathbf{x}_{\mathrm{mis}})$, where
$\mathbf{x}$ represents a random sample of size $n$,
$\mathbf{x}_{\mathrm{obs}}$ denotes the set of observed values, and $\mathbf{x}_{\mathrm{mis}}$ the missing data. Also,
let
\[ \mathbf{x}_{\mathrm{obs}}=(x_{\mathrm{obs},1},x_{\mathrm{obs},2},\ldots,x_{\mathrm{obs},n}),\]
where $x_{\mathrm{obs},i}$ represents the set of variables observed for case $i$, $i=1,\ldots,n$.

A simple way to estimate the concentration matrix $K$ would be to delete
all the cases which contain missing values and then estimating the
covariance by solving the GLasso problem (\ref{eq:glasso}) using only the
complete cases. However, excluding all cases having at least one missing variable can result in a substantial decrease of the sample size available
for the analysis. When $p$ is large relative to $n$ this problem is even
much more pronounced.  

Another ad-hoc method would impute the missing values by the
corresponding mean and then solving the GLasso problem. Such an approach is
typically inferior to what we present below, see also
Sections~\ref{sec:simulation1} and \ref{sec:isoprenoid}.

Much more promising is to base the inference for $\mu$ and $\Sigma$ (or $K$) in presence of
missing values on the observed log-likelihood:
 \begin{align}\label{eq:observedloglik1}
 \ell(\mu,\Sigma;\mathbf{x}_{\mathrm{obs}})=&\!-\frac{1}{2}\sum_{i=1}^{n}\Big(\log|\Sigma_{\mathrm{obs},i}|+(x_{\mathrm{obs},i}-\mu_{\mathrm{obs},i})^T(\Sigma_{\mathrm{obs},i})^{-1}(x_{\mathrm{obs},i}-\mu_{\mathrm{obs},i})\Big)
 \end{align}
where $\mu_{\mathrm{obs},i}$ and $\Sigma_{\mathrm{obs},i}$ are the mean and covariance matrix
of the observed components of $X$ (i.e., $X_{\mathrm{obs}}$) for observation
$i$. Formally (\ref{eq:observedloglik1}) can be re-written in terms of $K$
\begin{align}\label{eq:observedloglik2}
\ell(\mu,K;\mathbf{x}_{\mathrm{obs}})=&\!-\!\frac{1}{2}\!\sum_{i=1}^{n}\!\Big(\!\!\log\!|(\!K^{-1})_{\mathrm{obs},i}|
\!+\!(x_{\mathrm{obs},i}\!-\!\mu_{\mathrm{obs},i})^T\big((K^{-1})_{\mathrm{obs},i}\big)^{-1}(x_{\mathrm{obs},i}-\mu_{\mathrm{obs},i})\Big).
\end{align}

Inference for $\mu$ and $K$ can be based on the
log-likelihood (\ref{eq:observedloglik2}) if we assume that the underlying
missing data mechanism is \emph{ignorable}. The missing data mechanism is
said to be \emph{ignorable} if the probability that an observation is missing may
depend on $\mathbf{x}_{\mathrm{obs}}$ but not on $\mathbf{x}_{\mathrm{mis}}$ (\emph{Missing at Random}) and if the parameters of the data
model and the parameters of the missingness mechanism are
\emph{distinct}. For a precise definition see \cite{LittleRubin}.

Assuming that $p$ is large relative to $n$, we propose for the unknown parameters $(\mu,K)$ the
estimator: 
\begin{eqnarray}
\hat{\mu},\hat{K}&=&\argmin_{(\mu,K):
  K\succ0}-\ell_{\mathrm{pen}}(\mu,K;\mathbf{x}_{\mathrm{obs}})\label{eq:plik}\\
-\ell_{\mathrm{pen}}(\mu,K;\mathbf{x}_{\mathrm{obs}})&=&-\ell(\mu,K;\mathbf{x}_{\mathrm{obs}})+\lambda\|K\|_{1}\label{eq:plik2}
\end{eqnarray}
where $\ell(\mu,K;\mathbf{x}_{\mathrm{obs}})$ is given in (\ref{eq:observedloglik2}). We call
this estimator the \emph{MissGLasso}.

Despite the concise appearance of (\ref{eq:observedloglik2}), the observed log-likelihood
tends to be a complicated (non-convex) function of the individual $\mu_{j}$ and
$K_{j j'}$, $j ,j'=1,\ldots,p$, for a general missing data
pattern, with possible existence of multiple stationary points
(\cite{murray1977}; \cite{Schafer}). Optimization of (\ref{eq:plik}) is a non-trivial
issue. An efficient algorithm is presented in the next section.

\subsection{Computation}
For the derivation of our algorithm presented in Section
\ref{subsec:emalg} we will state first some facts about the conditional
distribution of the Multivariate Normal (MVN) Model. 
\subsubsection{Conditional distribution of the MVN Model and conditional
mean imputation}
Consider a partition $(X_{1},X_{2})\sim\calN(\mu,\Sigma)$. It is well known that
$X_{2}|X_{1}$ follows a linear regression on $X_{1}$ with mean
$\mu_{2}+\Sigma_{21}\Sigma_{11}^{-1}(X_{1}-\mu_{1})$ and covariance
$\Sigma_{22}-\Sigma_{21}\Sigma_{11}^{-1}\Sigma_{12}$
\citep{lauritzen96graphical}. Thus,
\begin{equation}\label{eq:cond1}
X_{2}|X_{1}\!\sim\calN\big(\mu_{2}\!+\!\Sigma_{21}\Sigma_{11}^{-1}\!(X_{1}-\mu_{1}), \Sigma_{22}\!-\!\Sigma_{21}\Sigma_{11}^{-1}\Sigma_{12}\big).
\end{equation}
Expanding the identity $K\Sigma=I$ gives the following useful expression:
\begin{eqnarray}\label{eq:identity}
\left(\begin{array}{cc}
K_{11}&K_{12}\\
K_{21}&K_{22}\end{array}\right)
\left(\begin{array}{cc}
\Sigma_{11}&\Sigma_{12}\\
\Sigma_{21}&\Sigma_{22}\end{array}\right)&=&
\left(\begin{array}{cc}
I&0\\
0&I\end{array}\right).
\end{eqnarray}
Using (\ref{eq:identity}) we can re-express (\ref{eq:cond1}) in terms of
$K$:
\begin{equation}\label{eq:cond2}
X_{2}|X_{1}\sim\calN\big(\mu_{2}-K_{22}^{-1}K_{21}(X_{1}-\mu_{1}), K_{22}^{-1}\big).
\end{equation}

Formula (\ref{eq:cond2}) will be used later in our developed EM algorithm for estimation of the mean
$\mu$ and the concentration matrix $K$ based on a random sample with
missing values. 

The spirit of this EM algorithm, see Section~\ref{subsec:emalg}, is captured by the following method of imputing missing values
by conditional means
due to \cite{Buck}:
\begin{itemize}
\item[1.] Estimate $(\mu,K)$ by solving the GLasso problem
  (\ref{eq:glasso}) using only the complete cases (delete the rows with
  missing values). This gives estimates $\hat{\mu}$, $\hat{K}$.
\item[2.] Use these estimates to calculate the least squares linear
  regressions of the missing variables on the present variables, case by
  case:
From the above discussion about the multivariate normal distribution, the
missing variables of case $i$, $x_{\mathrm{mis},i}$, given
$x_{\mathrm{obs},i}$ are normally distributed with mean
\begin{align*}
\E[x_{\mathrm{mis},i}|x_{\mathrm{obs},i},\mu,K]=&\mu_{\mathrm{mis}}-(K_{\mathrm{mis},\mathrm{mis}})^{-1}K_{\mathrm{mis},\mathrm{obs}}\left(x_{\mathrm{obs},i}-\mu_{\mathrm{obs}}\right).
\end{align*}
 Therefore an imputation of
  the missing values can be done by
 \[\hat{x}_{\mathrm{mis},i}:=\hat{\mu}_{\mathrm{mis}}-(\hat{K}_{\mathrm{mis},\mathrm{mis}})^{-1}\hat{K}_{\mathrm{mis},\mathrm{obs}}\left(x_{\mathrm{obs},i}-\hat{\mu}_{\mathrm{obs}}\right).\]
Here, $\hat{\mu}_{\mathrm{obs}}$ and $\hat{\mu}_{\mathrm{mis}}$ depend on case $i$. Furthermore, $\hat{K}_{\mathrm{mis},\mathrm{mis}}$ denotes the sub-matrix of $\hat{K}$ with rows and columns corresponding
  to the missing variables for case $i$. Similarly $\hat{K}_{\mathrm{mis},\mathrm{obs}}$ denotes
  the sub-matrix with rows corresponding to the missing variables and
  columns corresponding to the observed variables for case $i$. Note that
  we always notationally suppress the dependence on $i$.
\item[3.] Finally, re-estimate $(\mu,K)$ by solving the GLasso problem on the
  completed data in step 2.
\end{itemize}

\subsubsection{$\ell_{1}$-norm penalized likelihood estimation via the EM algorithm}
\label{subsec:emalg}

A convenient method for optimizing incomplete data problems like (\ref{eq:plik}) is the
EM algorithm (\cite{Dempster}).

To derive the EM algorithm for minimizing (\ref{eq:plik}) we note that the
complete data follows a multivariate normal distribution, which belongs to
the regular exponential family with sufficient statistics 
\[\mathbf{T}_{1}=\mathbf{x}^T1=\left(\sum_{i=1}^{n}x_{i1}, \sum_{i=1}^{n}x_{i2},
  \ldots,\sum_{i=1}^{n}x_{ip}\right) \]
and
\begin{align*}
\mathbf{T}_{2}&=\mathbf{x}^T\mathbf{x}=\left(\begin{array}{cccc}
\sum_{i=1}^{n}x_{i1}^{2}&\sum_{i=1}^{n}x_{i1}x_{i2}&\ldots&\sum_{i=1}^{n}x_{i1}x_{ip}\\
\sum_{i=1}^{n}x_{i2}x_{i1}&\sum_{i=1}^{n}x_{i2}^{2}&\ldots&\sum_{i=1}^{n}x_{i2}x_{ip}\\
\vdots&\vdots&&\vdots\\
\sum_{i=1}^{n}x_{ip}x_{i1}&\sum_{i=1}^{n}x_{ip}x_{i2}&\ldots&\sum_{i=1}^{n}x_{ip}^{2}\\
\end{array}\right).
\end{align*}

The complete penalized negative log-likelihood (\ref{eq:plikcomplete}) can be expressed in term of the sufficient statistics $\mathbf{T}_{1}$ and $\mathbf{T}_{2}$:
  \begin{align}\label{eq:completeplik}
   -\ell(\mu,K;\mathbf{x})+\!\lambda\|K\|_{1}=&\!
   -\frac{n}{2}\!\log\!|K|\!+\frac{n}{2}\mu^TK\mu-\!\mu^TK\mathbf{T}_{1}+\frac{1}{2}\mathrm{tr}(K\mathbf{T}_{2})+\lambda\|K\|_{1}
  \end{align}
which is linear in $\mathbf{T}_{1}$ and $\mathbf{T}_{2}$. The expected complete penalized
log-likelihood is denoted by:
\[\mathop{Q}(\mu,K|\mu',K')=-\E[\ell(\mu,K;\mathbf{x})|\mathbf{x}_{\mathrm{obs}},\mu',K']+\lambda\|K\|_{1}.\]

The EM algorithm works by iterating between the E- and M-Step. Denote the
parameter value at iteration $m$ by $(\mu^{(m)},K^{(m)})$ (m=0,1,2,\ldots),
where $(\mu^{(0)},K^{(0)})$ are the starting values.\\ \\
\textbf{E-Step:} Compute $\mathop{Q}(\mu,K|\mu^{(m)},K^{(m)})$:

As the complete penalized negative log-likelihood in (\ref{eq:completeplik})
is linear in $\mathbf{T}_{1}$ and $\mathbf{T}_{2}$, the E-Step consists of
calculating:
\begin{eqnarray*}
\mathbf{T}_{1}^{(m+1)}=\E[\mathbf{T}_{1}|\mathbf{x}_{\mathrm{obs}},\mu^{(m)},K^{(m)}]\quad
\textrm{and}\quad  
\mathbf{T}_{2}^{(m+1)}=\E[\mathbf{T}_{2}|\mathbf{x}_{\mathrm{obs}},\mu^{(m)},K^{(m)}].
\end{eqnarray*}

This involves computation of the conditional expectation of $x_{ij}$ and
$x_{ij}x_{ij'}$, $i=1,\ldots,n,\; j,j'=1,\ldots,p$. Using
formula (\ref{eq:cond2}) we find
\begin{displaymath}
\E[x_{ij}|x_{\mathrm{obs},i},\mu^{(m)},K^{(m)}]=\left\{\begin{array}{ll}
x_{ij}& \textrm{if $x_{ij}$ observed} \\
c_{j}&
\textrm{if $x_{ij}$ missing}
\end{array}\right.
\end{displaymath}
where $c$ is defined as 
\[c:=\mu^{(m)}_{\mathrm{mis}}-(K^{(m)}_{\mathrm{mis},\mathrm{mis}})^{-1}K^{(m)}_{\mathrm{mis},\mathrm{obs}}\big(x_{\mathrm{obs},i}-\mu^{(m)}_{\mathrm{obs}}\big).\]
Similarly, we compute 
\begin{align*}
&\E[x_{ij}x_{ij'}|x_{\mathrm{obs},i},\mu^{(m)},K^{(m)}]=\left\{\begin{array}{ll}
x_{ij}x_{ij'}&\textrm{if $x_{ij}$ \& $x_{ij'}$ observed,}\\ 
x_{ij}c_{j'}&\textrm{if $x_{ij}$ observed, $x_{ij'}$ missing,}\\
\big(K_{\mathrm{mis},\mathrm{mis}}^{(m)}\big)^{-1}_{jj'}+c_{j}c_{j'}
&\textrm{if $x_{ij}$ \& $x_{ij'}$ missing.}\\
\end{array}\right.
\end{align*}
Here the vector $c$ and the matrix $\big(K_{\mathrm{mis},\mathrm{mis}}^{(m)}\big)^{-1}$ are regarded
as naturally embedded in $\R^{p}$ and $\R^{p\times p}$ respectively, such
that the obvious indexing makes sense.

The E-Step involves inversion of a sparse matrix, namely $K^{(m)}_{\mathrm{mis},\mathrm{mis}}$, for
which we can use sparse linear algebra. Note also that $K^{(m)}_{\mathrm{mis},\mathrm{mis}}$
is positive definite and therefore invertible. Furthermore, considerable savings in
computation are obtained if cases with the same pattern of missing $X$'s
are grouped together.\\ \\
\textbf{M-Step:} Compute the updates $(\mu^{(m+1)},K^{(m+1)})$ as minimizer of $\mathop{Q}(\mu,K|\mu^{(m)},K^{(m)})$: 

It is easily seen from Equation (\ref{eq:completeplik}) that
$\mu^{(m+1)}$ and $K^{(m+1)}$ fulfill the following equations: 
\[\mu^{(m+1)}=\frac{1}{n}\mathbf{T}_{1}^{(m+1)}\]
\[K^{(m+1)}=\argmin_{K\succ0}\Big(-\log|K|+\mathrm{tr}(K\mathbf{S}^{(m+1)})+\frac{2\lambda}{n}\|K\|_{1}\Big)\]
where
$\mathbf{S}^{(m+1)}=\frac{1}{n}\mathbf{T}_{2}^{(m+1)}-\mu^{(m+1)}(\mu^{(m+1)})^T$. Therefore
the M-Step reduces to a GLasso problem of the form (\ref{eq:glasso}), which
can be solved by the algorithm described in
Section~\ref{subsec:glasso}. 



\subsubsection{Numerical properties}
A nice property of every EM algorithm is that the objective function is reduced in each iteration,
\[-\ell_{\mathrm{pen}}(\mu^{(m+1)},K^{(m+1)};\mathbf{x}_{\mathrm{obs}})\leq-\ell_{\mathrm{pen}}(\mu^{(m)},K^{(m)};\mathbf{x}_{\mathrm{obs}}).\]
Nevertheless the descent property does not guarantee convergence to a
stationary point. 

A detailed account of the convergence properties of the
EM algorithm in a general setting has been given by \cite{wu}. Under mild regularity
conditions including differentiability and continuity, convergence to
stationary points is proven for the EM algorithm.

For the EM algorithm described in Section~\ref{subsec:emalg} which optimizes
a non-differentiable function we have the following result:
\begin{proposition}\label{prop:convergence}
 Every limit point $(\bar{\mu},\bar{K})$, with $\bar{K}\succ0$, of the sequence $\{(\mu^{(m)},K^{(m)});m =
 0,1,2,\ldots\}$, generated by the EM algorithm, is a stationary point
 of the criterion function in (\ref{eq:plik2}).
\end{proposition}
A proof is given in the Appendix.

\subsubsection{Selection of the tuning parameter}\label{sec:tuning}
In practice a tuning parameter $\lambda$ has to be chosen in order to
tradeoff goodness-of-fit and model complexity. One possibility
is to use a modified BIC criterion which minimizes
\[\textrm{BIC}=-2\ell(\hat{\mu},\hat{K};\mathbf{x}_{\mathrm{obs}})+\log(n)\textrm{df},\]
over a grid of candidate values for $\lambda$. Here $(\hat{\mu},\hat{K})$
denotes the \emph{MissGLasso} estimator (\ref{eq:plik}) using the tuning
parameter $\lambda$ and $\textrm{df}=\sum_{j\leq j'}1_{\{\hat{K}_{jj'}\neq0\}}$ are the
degrees of freedom \citep{yuan05model}. The defined BIC criterion is based on the observed
log-likelihood $\ell(\mu,K;\mathbf{x}_{\mathrm{obs}})$ which is also suggested by
\cite{ibrahim2008}. 

Another possibility to tune $\lambda$ is to use the popular V-fold
cross-validation method, based on the observed negative log-likelihood as
loss function. We proceed as follows: First divide all the samples into V
disjoint subgroups (folds), and denote the samples in $v$th fold by $N_{v}$ for
$v=1,\ldots,V$. The V-fold cross-validation score is defined as:
\begin{align*}
CV(\lambda)=&\sum_{v=1}^{V}\! \bigg(\!\sum_{i \in
    N_{v}}\!\!\log\!|(\hat{\Sigma}_{-v})_{\mathrm{obs},i}|+(x_{\mathrm{obs},i}\!-(\hat{\mu}_{-v}\!)_{\mathrm{obs},i})^T((\!\hat{\Sigma}_{-v})_{\mathrm{obs},i})^{-1}(x_{\mathrm{obs},i}-(\hat{\mu}_{-v})_{\mathrm{obs},i})\bigg)
\end{align*}
where $\hat{\Sigma}_{-v}=(\hat{K}_{-v})^{-1}$, $\hat{K}_{-v}$ and $\hat{\mu}_{-v}$ denote the
estimates based on the sample $(\cup_{v'=1}^{V}N_{v'})/N_{v}$. Then,
find the best $\hat{\lambda}$ that minimizes $CV(\lambda)$. Finally, fit the
\emph{MissGLasso} to all the data using $\hat{\lambda}$ to get the final
estimator of the inverse covariance matrix.

\section{Extension to sparse regression}\label{sec:sparseregr}
The \emph{MissGLasso} could be applied directly to high-di\-men\-sion\-al regression
with missing values. Suppose a scalar response variable $Y$ is regressed on $p$ predictor
variables $X^{(1)},\ldots,X^{(p)}$. If we assume
joint multivariate normality for $\widetilde{X}=(Y,X^{(1)},\ldots,X^{(p)})$ with mean and
concentration matrix given by
\begin{eqnarray*}
\tilde{\mu}=(\tilde{\mu}_{y},\tilde{\mu}_{x}),\qquad \widetilde{K}=\left(\begin{array}{cc}
\tilde{k}_{yy}&\tilde{k}_{yx}\\
\tilde{k}_{yx}^{T}&\widetilde{K}_{xx}\end{array}\right),
\end{eqnarray*}
we can estimate $(\tilde{\mu},\widetilde{K})$ with the \emph{MissGLasso}. The
regression coefficients $\hat{\beta}$ are then given by
$\hat{\beta}=-\hat{\tilde{k}}_{yy}^{-1}\hat{\tilde{k}}_{yx}$. 
This approach is short-sighted: a
zero in the concentration matrix, say ${\widetilde{K}}_{jj'}=0$, means that $\widetilde{X}^{(j)}$
and $\widetilde{X}^{(j')}$ are conditionally independent given all other variables
in $\widetilde{X}$, where $Y$ is included in $\widetilde{X}$. But we typically care
about conditional independence of $X^{(j)}$ and $X^{(j')}$ given all
other variables in $X$ (which does not include $Y$). In other words, we think
that sparsity in the concentration matrix $K$ of $X$ (and of course
$\beta$) is desirable. However, sparsity in the matrix $K$ is not enforced by penalizing
$\|\widetilde{K}\|_1$. This can be seen by noting that
$\widehat{K}=\big(\widehat{\widetilde{\Sigma}}_{xx}\big)^{-1}$ is not sparse
for most cases of sparse estimates $\widehat{\widetilde{K}}$. For a similar discussion about this issue, see \cite{scout2009}.

We describe in Section~\ref{sec:two-stage} a two-stage procedure which
results in sparse estimates for the concentration matrix $K$ of $X$ and the
regression parameters $\beta$. In order to motivate the
second stage of this procedure, we first introduce a likelihood-based method
for sparse regression with complete data.   

\subsection{$\ell_{1}$-penalization in the regression model with complete
  data}\label{sec:lassocomplete}
Consider a Gaussian linear model:
\begin{eqnarray*}
 &&Y_i=\beta^{T}X_i+\epsilon_i,\quad i=1,\ldots, n,\\
&&\epsilon_1,\ldots,\epsilon_n \quad\textrm{i.i.d.}\sim\calN(0,\sigma^{2}),
 \end{eqnarray*}
where $X_{i}\in\R^{p}$ are covariates.

In the usual linear regression model, the $\ell_{1}$-norm penalized
estimator, called the Lasso (\cite{tibshirani96regression}), is defined as:
\begin{eqnarray}\label{eq:lasso}
\hat{\beta}_{\lambda}&=&\argmin_{\beta}\frac{1}{2}\|\mathbf{y}-\mathbf{x}\beta\|^{2}+\lambda\|\beta\|_{1},
\end{eqnarray}
with $n\times 1$ vector $\mathbf{y}$, $p\times 1$ regression vector $\beta$ and
$n\times p$ design matrix $\mathbf{x}$. The Lasso estimator in (\ref{eq:lasso}) is not
likelihood-based and does not provide an estimate of the nuisance parameter $\sigma$.
In \cite{fmrlasso2009}, we suggest to take $\sigma$ into the definition and
optimization of a penalized likelihood estimator: we proceed with the following estimator,
\begin{align}\label{eq:classo}
\hat{\beta}_{\lambda},\hat{\sigma}_{\lambda}=&\argmin_{\beta,\sigma}-\ell(\beta,\sigma;\mathbf{y}|\mathbf{x})+\lambda\frac{\|\beta\|_{1}}{\sigma}\nonumber\\
=&\argmin_{\beta,\sigma}\Big(n\log(\sigma)\!+\frac{1}{2\sigma^{2}}\!\|\mathbf{y}-\!\mathbf{x}\beta\|^{2}\!+\lambda\frac{\|\beta\|_{1}}{\sigma}\Big).
\end{align}
Intuitively the estimator (\ref{eq:classo}) penalizes the $\ell_{1}$-norm of
the regression coefficients and small variances $\sigma$
simultaneously. Furthermore this estimator is equivariant under scaling
(see \cite{fmrlasso2009}). Most
importantly if we reparametrize $\rho=1/\sigma$ and $\phi=\beta/\sigma$ we get the
following convex optimization problem:
\begin{align}\label{eq:classo2}
\hat{\phi}_{\lambda},\hat{\rho}_{\lambda}=&\argmin_{\phi,\rho}\Big(\!-n\log(\rho)+\!\frac{1}{2}\!\|\rho\mathbf{y}-\!\mathbf{x}\phi\|^{2}+\!\lambda\|\phi\|_{1}\Big).
\end{align}
This optimization problem can be solved efficiently in a coordinate-wise
fashion. The following algorithm is very easy to implement, it simply
updates, in each iteration, $\rho$ followed by the coordinates $\phi_j$, $j=1,\ldots,p$, of $\phi$.

\textbf{Coordinate-wise algorithm for solving (\ref{eq:classo2})}
\begin{itemize}
\item [1.] Start with initial guesses for $\phi^{(0)},\rho^{(0)}$.
\item [2.] Update the current estimates $\phi^{(m)},\rho^{(m)}$
  coordinate-wise by:
  \begin{eqnarray*}
    \rho^{(m+1)}&=&\frac{\mathbf{y}^{T}\mathbf{x}\phi^{(m)}+\sqrt{(\mathbf{y}^{T}\mathbf{x}\phi^{(m)})^{2}+4\mathbf{y}^{T}\mathbf{y}n}}{2\mathbf{y}^{T}\mathbf{y}}\\
    \phi_{j}^{(m+1)}&=&\left\{\begin{array}{ll}
        0&\textrm{if $|S_{j}|\leq\lambda$}\\
        (\lambda-S_{j})/\mathbf{x}_{j}^{T}\mathbf{x}_{j}&\textrm{if
          $S_{j}>\lambda$}\\
        -(\lambda+S_{j})/\mathbf{x}_{j}^{T}\mathbf{x}_{j}&\textrm{if
          $S_{j}<-\lambda$}\end{array}\right.
  \end{eqnarray*}
  where $S_{j}$ is defined as $$S_{j}=-\rho^{(m+1)}\mathbf{x}_j^{T}\mathbf{y}+\sum\limits_{s<j}\phi_{s}^{(m+1)}\mathbf{x}_{j}^{T}\mathbf{x}_{s}
  +\sum\limits_{s>j}\phi_{s}^{(m)}\mathbf{x}_{j}^{T}\mathbf{x}_{s}$$
  and $j=1,\ldots,p$.
\item [3.] Iterate step 2 until convergence.
\end{itemize}

With $\mathbf{x}_{j}$ we denote the $j$th column vector of the $n\times p$
matrix $\mathbf{x}$. This algorithm can be implemented very efficiently as
it is the case for the coordinate descent algorithm solving the usual Lasso problem. For
example \emph{naive updates}, \emph{covariance updates} and the
\emph{active-set} strategy described in
\cite{friedman07fastlasso} and \cite{friedmanetal08} are applicable here as well.

Numerical convergence of the above algorithm is ensured as follows.

\begin{proposition}\label{prop:convergence2}
 Every limit point $(\bar{\rho},\bar{\phi})$ of the sequence $\{(\rho^{(m)},\phi^{(m)});m =
 0,1,2,\ldots\}$, generated by the above algorithm, is a stationary point
 of the criterion function in (\ref{eq:classo2}).
\end{proposition}
A proof is given in the Appendix.

Note that the algorithm only involves inner products
of $\mathbf{x}$ and $\mathbf{y}$. We will make use of this algorithm in the
next section when treating regression with missing values.

\subsection{Two-stage likelihood approach for sparse regression with
  missing data}\label{sec:two-stage}
We now develop a two-stage $\ell_{1}$-penalized likelihood approach for
sparse regression with potential missing values in the design matrix
$\mathbf{x}$. Consider the Gaussian linear model:
\begin{eqnarray}\label{eq:normalregr}
 &&X_i\sim\calN(\mu,\Sigma),\quad X_i=(X^{(1)}_{i},\ldots,X^{(p)}_{i})\in\R^{p}\nonumber\\
 &&Y_i|X_i=\beta^{T}X_i+\epsilon_i,\quad\epsilon_i\; \textrm{i.i.d.}
 \sim\calN(0,\sigma^{2})\\
&&X_i,\:\epsilon_i\quad \textrm{independent of each other and
  among}\quad i=1,\ldots,n.\nonumber 
\end{eqnarray}
If we assume model (\ref{eq:normalregr}) it is obvious that $(Y_i,X_i)$ follows again
a multivariate normal distribution. The corresponding mean and covariance matrix are
given in the following lemma:
\begin{lemma}\label{lemma:mvn}
 Assuming model (\ref{eq:normalregr}), $(Y_i,X_i)$ is normally distributed
 $\calN(\tilde{\mu},\widetilde{\Sigma})$ with $\tilde{\mu}=(\beta^{T}\mu,\mu)$ and
 \begin{eqnarray}
\widetilde{\Sigma}=\left(\begin{array}{cc}\sigma^{2}\!+\!\beta^{T}\!\Sigma\beta&\beta^{T}\!\Sigma\\
 \Sigma\beta&\Sigma\end{array}\right),\;\widetilde{K}=\left(\begin{array}{cc}\frac{1}{\sigma^{2}}&-\frac{\beta^{T}}{\sigma^{2}}\\
 -\frac{\beta}{\sigma^{2}}&K\!+\!\frac{\beta\beta^{T}}{\sigma^{2}}\end{array}\right)
 \end{eqnarray}
\end{lemma}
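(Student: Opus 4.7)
The plan is to establish joint Gaussianity by writing $(Y_i,X_i)$ as an affine function of the independent Gaussian pair $(X_i,\epsilon_i)$, then read off $\tilde\mu$ and $\tilde\Sigma$ from bilinearity of covariance, and finally verify the claimed $\tilde K$ either by checking $\tilde K\tilde\Sigma = I_{p+1}$ block by block or, equivalently, by invoking the Schur-complement inversion formula.

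First I would observe that
\[
\begin{pmatrix} Y_i \\ X_i \end{pmatrix}
= \begin{pmatrix} \beta^T & 1 \\ I_p & 0 \end{pmatrix}
\begin{pmatrix} X_i \\ \epsilon_i \end{pmatrix},
\]
so $(Y_i,X_i)$ is an affine image of an independent Gaussian vector and is therefore jointly normal. Taking expectations gives $\E[Y_i]=\beta^T\mu$, so $\tilde\mu=(\beta^T\mu,\mu)$. Using $\Cov(X_i,\epsilon_i)=0$ and bilinearity, $\Var(Y_i)=\beta^T\Sigma\beta+\sigma^2$, $\Cov(Y_i,X_i)=\beta^T\Sigma$, and $\Var(X_i)=\Sigma$, producing the stated $\tilde\Sigma$.

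For the concentration matrix the cleanest route is to verify $\tilde K\tilde\Sigma = I_{p+1}$ by direct block multiplication, using only $K\Sigma = I_p$. The $(1,1)$ entry reduces to $(\sigma^2+\beta^T\Sigma\beta)/\sigma^2-\beta^T\Sigma\beta/\sigma^2=1$; the $(1,2)$ block to $\beta^T\Sigma/\sigma^2-\beta^T\Sigma/\sigma^2=0$; the $(2,1)$ block to $-\beta(\sigma^2+\beta^T\Sigma\beta)/\sigma^2+(K+\beta\beta^T/\sigma^2)\Sigma\beta$, which collapses to $0$ once one substitutes $K\Sigma\beta=\beta$; and the $(2,2)$ block to $-\beta\beta^T\Sigma/\sigma^2+(K+\beta\beta^T/\sigma^2)\Sigma=I_p$. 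Alternatively, one can derive $\tilde K$ from scratch via Schur complements: the Schur complement of $\Sigma$ in $\tilde\Sigma$ is $\sigma^2+\beta^T\Sigma\beta-\beta^T\Sigma\,\Sigma^{-1}\Sigma\beta=\sigma^2$, so its inverse is $1/\sigma^2$, the off-diagonal blocks come out as $\mp\beta^T/\sigma^2$ and $\mp\beta/\sigma^2$, and the bottom-right block is $\Sigma^{-1}+\Sigma^{-1}\Sigma\beta\cdot(1/\sigma^2)\cdot\beta^T\Sigma\,\Sigma^{-1}=K+\beta\beta^T/\sigma^2$. There is no real obstacle: the only ingredient beyond elementary linear algebra is $K=\Sigma^{-1}$, and every step is a one-line block manipulation.
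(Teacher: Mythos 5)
Your proof is correct and follows essentially the same route as the paper: represent $(Y_i,X_i)$ as a linear image of the independent Gaussian pair $(\epsilon_i,X_i)$ to get joint normality with the stated $\tilde\mu$ and $\tilde\Sigma$, then obtain $\tilde K$ from the identity $\tilde K\tilde\Sigma=I$. You merely carry out the block verification (and the equivalent Schur-complement computation) explicitly where the paper leaves it to the reader.
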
 
A proof is given in the Appendix.

In a first stage of the procedure we estimate the inverse
covariance $K=\Sigma^{-1}$ of $X$ using the \emph{MissGLasso}:
\\ \\
\textbf{1st stage:}
\begin{eqnarray}\label{eq:1ststage}
\hat{\mu}_{\lambda_1},\hat{K}_{\lambda_{1}}=\argmin_{(\mu,K):K\succ0}-\ell(\mu,K;\mathbf{x}_{\mathrm{obs}})\!+\!\lambda_{1}\|K\|_{1}.
\end{eqnarray}
Let now $\ell(\beta,\sigma,\mu,K;\mathbf{y,x_{\mathrm{obs}}})$ be the observed
log-like\-li\-hood of the data $(\mathbf{y,x})$. In the second stage of the
procedure we hold $\mu$ and $K$ fixed at the values
$\hat{\mu}_{\lambda_{1}}$ and $\hat{K}_{\lambda_{1}}$ from the first stage
and estimate $\beta$ and $\sigma$ by: 
\\ \\
\textbf{2nd stage:}
\begin{align}\label{eq:2ndstage}
&\hat{\beta}_{\lambda_2},\hat{\sigma}_{\lambda_{2}}=\argmin_{\beta,\sigma}-\ell(\beta,\sigma,\hat{\mu}_{\lambda_{1}},\hat{K}_{\lambda_{1}};\mathbf{y,x_{\mathrm{obs}}})\!+\!\lambda_{2} \frac{\|\beta\|_{1}}{\sigma}.
\end{align} 
Note that we use two different tuning parameters for the first and the second stage, denoted by $\lambda_{1}$ and $\lambda_{2}$. In practice, instead of tuning over a two-dimensional grid
$(\lambda_1,\lambda_2)$, we consider the 1st and 2nd stage
independently. We tune first $\lambda_{1}$ using BIC or
cross-validation as explained in Section~\ref{sec:tuning} and then we use the
resulting estimator in the 2nd stage and tune $\lambda_2$.

A detailed description of the EM algorithm for solving the 1st stage
problem was given in Section~\ref{subsec:emalg}. We now present an
EM algorithm for solving the 2nd stage. In the E-Step of our algorithm,
we calculate the conditional expectation of the complete-data
log-likelihood given by
\begin{eqnarray}\label{eq:complloglik2}
\ell(\beta,\sigma,\hat{\mu}_{\lambda_{1}},\hat{K}_{\lambda_{1}};\mathbf{y,x})&&=\ell(\beta,\sigma;\mathbf{y}|\mathbf{x})+\ell(\hat{\mu}_{\lambda_{1}},\hat{K}_{\lambda_{1}};\mathbf{x})\nonumber\\
&&=
\ell(\beta,\sigma;\mathbf{y}|\mathbf{x})+\mathrm{const}\nonumber\\
&&=-n\log(\sigma)-\frac{1}{2\sigma^{2}}\|\mathbf{y}-\mathbf{x}\beta\|^{2}+\mathrm{const}\\
&&=-n\log(\sigma)-\left(\frac{\mathbf{y}^{T}\mathbf{y}}{2\sigma^{2}}-\frac{\mathbf{y}^{T}\mathbf{x}\beta}{\sigma^{2}}+\frac{\beta^{T}\mathbf{x}^{T}\mathbf{x}\beta}{2\sigma^{2}}\right)+\mathrm{const}.\nonumber
\end{eqnarray}
We see from Equation (\ref{eq:complloglik2}) that the part of the complete
log-likelihood which depends only on the regression parameters $\beta$ and
$\sigma$ is linear in the inner products $\mathbf{y}^{T}\mathbf{y}$,
$\mathbf{y}^{T}\mathbf{x}$ and $\mathbf{x}^{T}\mathbf{x}$. Therefore we can
write the E-Step as:
\\ \\
\textbf{E-Step:}
 \begin{eqnarray}
 \mathbf{T}^{(m+1)}_{1}&=&\E[\mathbf{y}^{T}\mathbf{x}|\mathbf{y},\mathbf{x}_{\mathrm{obs}},\beta^{(m)},\sigma^{(m)},\hat{\mu}_{\lambda_{1}},\hat{K}_{\lambda_{1}}]\nonumber\\
 \mathbf{T}^{(m+1)}_{2}&=&\E[\mathbf{x}^{T}\mathbf{x}|\mathbf{y},\mathbf{x}_{\mathrm{obs}},\beta^{(m)},\sigma^{(m)},\hat{\mu}_{\lambda_{1}},\hat{K}_{\lambda_{1}}]\nonumber.
 \end{eqnarray}
These conditional expectations can be computed as in
Section~\ref{subsec:emalg} using Lemma~\ref{lemma:mvn}. In particular,
these computations involve inversion of the matrices
$\widetilde{K}^{(m)}_{\mathrm{mis},\mathrm{mis}}$. Because of the special structure of
$\widetilde{K}^{(m)}_{\mathrm{mis},\mathrm{mis}}$, see Lemma~\ref{lemma:mvn}, explicit inversion is
possible by exploiting the formula
$(A+bb^{T})^{-1}=A^{-1}-A^{-1}bb^{T}A^{-1}/(1+b^{T}A^{-1}b)$, where
$A^{-1}$ has been previously computed in the first stage.

Finally, in the M-Step, we update the regression coefficients by:
\\ \\
\textbf{M-Step:}
\begin{align}\label{eq:mstep2}
 \beta^{(m+1)},\sigma^{(m+1)}=&\argmin_{\beta,\sigma}\bigg(\!
 n\log(\sigma)\!+\!\frac{\mathbf{y}^{T}\mathbf{y}}{2\sigma^{2}}\!-\!\frac{\mathbf{T}^{(m+1)}_{1}\beta}{\sigma^{2}}+\frac{\beta^{T}\mathbf{T}^{(m+1)}_{2}\beta}{2\sigma^{2}}+\lambda
 \frac{\|\beta\|_{1}}{\sigma}\bigg).
\end{align}
If we reparametrize $\rho=1/\sigma$ and $\phi=\beta/\sigma$ in
(\ref{eq:mstep2}), we see that the M-Step has essentially the same form as
(\ref{eq:classo2}). Therefore, we can use the algorithm described in
Section~\ref{sec:lassocomplete} but exchanging the
inner products $\mathbf{y}^{T}\mathbf{x}$ and
$\mathbf{x}^{T}\mathbf{x}$ for $\mathbf{T}^{(m+1)}_1$ and $\mathbf{T}^{(m+1)}_2$.

\section{Simulations}
\subsection{Simulations for sparse inverse covariance estimation}
\subsubsection{Simulation 1}\label{sec:simulation1}
We consider model 1, model 2, model 3 and model 4 of \cite{rothman2008spi} with p = 10, 50,
100, 200, 300: $X_{1},\ldots,X_{n}$ i.i.d. $\sim\calN(0,\Sigma)$ with 
\begin{description}
\item[Model 1:] $n=100$. AR(1), $\Sigma_{jj'}=0.7^{|j'-j|}$.\vspace{0.15cm}

\item[Model 2:] $n=150$. AR(4), $K_{jj'}\!=\!\mathrm{I}_{(|j'-j|=0)}+0.4\mathrm{I}_{(|j'-j|=1)}+0.2\mathrm{I}_{(|j'-j|=2)}+0.2\mathrm{I}_{(|j'-j|=3)}+0.1\mathrm{I}_{(|j'-j|=4)}$.\vspace{0.15cm}

\item[Model 3:]$n=200$. $K=B+\delta\mathrm{I}$, where each off-diagonal entry in $B$ is
generated independently and equals $0.5$ with probability $\alpha=0.1$ or 0 with
probability $1-\alpha=0.9$, all diagonal  entries of $B$ are zero, and $\delta$ is
chosen such that the condition number of $K$ is~$p$.\vspace{0.15cm}

\item[Model 4:]$n=250$. Same as model 3 except $\alpha=0.5$.
\end{description}

Note that in all models $\Sigma^{-1}$ is sparse. In models 1 and 2 the number of non-zeros
in $\Sigma^{-1}$ is linear in $p$, whereas in models 3 and 4 it is proportional to $p^2$.

For all $20$ settings (4 models with $p=10$, $50$, $100$, $200$, $300$) we make 50 simulation runs. In each run we proceed as
follows: 
\begin{itemize}
\item
We generate $n$ training observations
and a separate set of $n$ validation observations.
\item
In the training set we delete completely at random $10\%, 20\%$
and $30\%$ of the data. Per setting, we therefore get three training sets with
different degree of missing data.
\item
The \emph{MissGLasso} estimator is fitted on each of the three mutilated
training sets, with the tuning
parameter $\lambda$ selected by minimizing twice the negative
log-likelihood (log-loss) on the validation data. This results in three different estimators of the
concentration matrix $K$. 
\end{itemize}
We evaluate the concentration matrix estimation performance using the
Kullback-Leibler loss:
\begin{equation*}\label{eq:klloss}
\Delta_{KL}(\hat{K},K)=\mathrm{tr}(\Sigma\hat{K})-\log|\Sigma\hat{K}|-p.
\end{equation*}
We compare the \emph{MissGLasso} with the following
estimators: 
\begin{itemize}
\item \emph{MeanImp}: Impute the missing values by their corresponding
  column means. Then apply the GLasso from~(\ref{eq:glasso}) on the imputed data. 
\item \emph{MissRidge}: Estimate $\hat{K}=\hat{\Sigma}^{-1}$ by
minimizing $-\ell(\mu,K;\mathbf{x}_{\mathrm{obs}})+\lambda\|K\|_2^2.$ For
optimization we use an EM algorithm with an $\ell_2$-penalized (inverse) covariance update in the
M-Step. In the case of complete data, covariance estimation with an
$\ell_2$-penalty is derived in \cite{scout2009}. 
\item \emph{MLE}: Compute the (unpenalized) maximum likelihood
estimator using the EM algorithm implemented in the R-package
\emph{norm} (only for $p=10$).
\end{itemize}

Results for all covariance models with different degrees of
missingness are summarized in Tables~\ref{tab:KLloss1} and \ref{tab:KLloss2} which report the average
Kullback-Leibler loss and the standard error. For all settings of
models 1 and 3 the \emph{MissGLasso} outperforms \emph{MeanImp} and
\emph{MissRidge} significantly. In model 2 \emph{MissGLasso} works
competitive but sometimes \emph{MeanImp} or \emph{MissRidge} is slightly better. In model 4, the most dense scenario,
\emph{MissRidge} exhibits the lowest average Kullback-Leibler
loss. Interestingly, in models 1 and 2 with large values of $p$,
\emph{MissRidge} works rather poorly in comparison to \emph{MeanImp}. The
reason is that in very sparse settings the gain of
$\ell_1$- over $\ell_2$-regularization dominates the gain of EM-type
estimation over ``naive'' column-wise mean imputation. For the lowest dimensional
case ($p=10$) we further notice that the \emph{MLE} estimator performs very badly
with high degrees of missingness whereas the \emph{MissGLasso}
and the \emph{MissRidge} remain stable.

\begin{table}[!p]
\centering
\tabcolsep=1.5pt
\begin{tabular}{|cc||c|c|c|c|} \hline
\multicolumn{2}{|c||}{\textbf{Model 1}}&MLE&MeanImp&MissRidge&MissGLasso\\\hline
p=10&10\%&0.82(0.03) & 0.66(0.02) & 0.53(0.02) & \textbf{0.41(0.02)} \\ \hline
    &20\%&1.34(0.07) & 1.04(0.03) & 0.66(0.02) & \textbf{0.50(0.02)} \\ \hline
    &30\%&3.32(0.39) & 1.60(0.05) & 0.79(0.02) & \textbf{0.61(0.02)} \\\hline
p=50&10\%&NA&6.49(0.06) & 9.39(0.06) & \textbf{4.81(0.04)} \\ \hline
&20\%&NA &9.17(0.10) & 10.84(0.08) & \textbf{5.63(0.06)} \\ \hline
&30\%&NA&12.38(0.10) & 12.44(0.09) & \textbf{6.62(0.07)} \\ \hline
p=100&10\%&NA& 16.49(0.10) & 29.79(0.12) & \textbf{13.07(0.08)} \\ \hline
 &20\%&NA& 21.77(0.12) & 33.25(0.13) & \textbf{14.99(0.10)} \\ \hline
&30\%&NA &28.65(0.20) & 37.35(0.14) & \textbf{17.72(0.12)} \\ \hline
p=200&10\%&NA& 40.36(0.14) & 85.83(0.15) & \textbf{33.79(0.14)} \\ \hline
&20\%&NA &50.61(0.18) & 92.52(0.15) & \textbf{38.13(0.14)} \\ \hline
&30\%&NA &64.35(0.27) & 100.03(0.14) & \textbf{44.66(0.18)} \\ \hline
p=300&10\%&NA& 67.20(0.14) & 151.85(0.15) & \textbf{57.95(0.14)} \\ \hline
&20\%&NA&82.39(0.26) & 160.85(0.16) & \textbf{65.13(0.17)} \\ \hline
&30\%&NA&103.03(0.26) & 170.22(0.14) & \textbf{75.46(0.21)} \\ \hline
\end{tabular}\vspace{0.1cm}

\begin{tabular}{|cc||c|c|c|c|}\hline
\multicolumn{2}{|c||}{\textbf{Model 2}}&MLE&MeanImp&MissRidge&MissGLasso\\\hline
p=10&10\%& 0.53(0.02) & 0.50(0.01) & \textbf{0.42(0.01)} & 0.44(0.01)\\\hline
&20\%& 0.72(0.03) & 0.75(0.02) & \textbf{0.48(0.01)} & 0.51(0.01) \\ \hline
&30\%& 1.29(0.07) & 1.25(0.03) & \textbf{0.64(0.02)} & 0.65(0.02) \\ \hline
p=50&10\%& NA & \textbf{4.31(0.03)} & 6.27(0.02) & 4.33(0.02) \\ \hline
&20\%& NA & 5.32(0.04) & 6.86(0.02) & \textbf{4.84(0.03)} \\ \hline
&30\%& NA & 7.43(0.05) & 7.49(0.03) & \textbf{5.52(0.04)} \\ \hline
p=100&10\%& NA & \textbf{9.66(0.04)} & 17.12 (0.03) & 9.93 (0.04) \\ \hline
&20\%& NA & 11.56(0.06) & 18.05(0.03) & \textbf{11.08(0.04)} \\ \hline
&30\%& NA & 15.33(0.06) & 18.87(0.03) & \textbf{12.28(0.04)} \\\hline
p=200&10\%& NA & \textbf{21.36(0.08)} & 43.46(0.04) & 22.28(0.07) \\ \hline
&20\%& NA & \textbf{24.61(0.10)} & 44.33(0.04) & 24.72(0.07) \\ \hline
&30\%& NA & 31.34(0.06) & 45.15(0.04) & \textbf{27.26(0.06)} \\ 
\hline
p=300&10\%& NA & \textbf{33.48(0.06)} & 71.98(0.05) & 35.44(0.06) \\ \hline
&20\%& NA & \textbf{38.42(0.09)} & 72.38(0.05) & 38.88(0.08) \\ \hline
&30\%& NA & 47.37(0.02) & 72.72(0.05) & \textbf{43.14(0.07)} \\ \hline
\end{tabular}
\caption{Model 1 and Model 2 (strong sparsity): Average (SE) Kullback-Leibler loss of \emph{MLE},
  \emph{MeanImp}, \emph{MissRidge} and \emph{MissGLasso} with different degrees of
  missingness. Method with lowest average Kullback-Leibler loss in bold face}
\label{tab:KLloss1}
\end{table}

\begin{table}[!p]
\centering
\tabcolsep=1.5pt
\begin{tabular}{|cc||c|c|c|c|}\hline
\multicolumn{2}{|c||}{\textbf{Model 3}}&MLE&MeanImp&MissRidge&MissGLasso\\\hline
p=10&10\%& 0.38(0.01) & 0.31(0.01) & 0.30(0.01) & \textbf{0.22(0.01)} \\ \hline
&20\%& 0.51(0.02) & 0.53(0.01) & 0.36(0.01) & \textbf{0.26(0.01)} \\ \hline
&30\%& 0.78(0.03) & 0.98(0.02) & 0.45(0.01) & \textbf{0.33(0.01)} \\ \hline
p=50&10\%& NA & 3.56(0.03) & 4.71(0.02) & \textbf{3.04(0.02)} \\ \hline
&20\%& NA & 5.05(0.04) & 5.30(0.03) & \textbf{3.63(0.03)} \\ \hline
&30\%& NA & 7.36(0.07) & 5.98(0.03) & \textbf{4.41(0.04)} \\ \hline
p=100&10\%& NA & 10.45(0.05) & 13.86(0.04) & \textbf{9.53(0.05)} \\ \hline
&20\%& NA & 13.41(0.07) & 15.06(0.04) & \textbf{11.05(0.06)} \\ \hline
&30\%& NA & 18.15(0.10) & 16.42(0.05) &\textbf{ 13.01(0.06)} \\ \hline
p=200&10\%& NA & 31.92(0.08) & 38.97(0.05) & \textbf{30.74(0.07)} \\ \hline
&20\%& NA & 37.49(0.11) & 41.13(0.06) & \textbf{34.23(0.09)} \\ \hline
&30\%& NA & 46.18(0.16) & 43.67(0.06) & \textbf{38.15(0.08)} \\ \hline
p=300&10\%& NA & 60.69(0.10) & 71.39(0.07) & \textbf{59.13(0.10)} \\ \hline
&20\%& NA & 69.60(0.16) & 74.92(0.08) & \textbf{64.98(0.12)} \\ \hline
&30\%& NA & 83.12(0.19) & 79.39(0.08) & \textbf{71.58(0.11)} \\ \hline
\end{tabular}\vspace{0.1cm}

\begin{tabular}{|cc||c|c|c|c|}\hline
\multicolumn{2}{|c||}{\textbf{Model 4}}&MLE&MeanImp&MissRidge&MissGLasso\\\hline
p=10&10\%& 0.30(0.01) & 0.29(0.01) & 0.24(0.01) & \textbf{0.23(0.01)} \\ \hline
&20\%& 0.40(0.01) & 0.54(0.02) & 0.30(0.01) & \textbf{0.29(0.01)} \\ \hline
&30\%& 0.56(0.02) & 0.94(0.02) & \textbf{0.36(0.01)} & 0.37(0.01) \\\hline
p=50&10\%& NA & 5.23(0.03) & \textbf{4.27(0.02)} & 5.04(0.03) \\ \hline
&20\%& NA & 6.66(0.04) &\textbf{ 4.88(0.03)} & 5.77(0.03) \\ \hline
&30\%& NA & 8.95(0.07) &\textbf{ 5.50(0.03)} & 6.55(0.04) \\ \hline
p=100&10\%& NA & 14.23(0.04) & \textbf{12.69(0.03)} & 14.02(0.04) \\ \hline
&20\%& NA & 16.79(0.06) & \textbf{13.93(0.03)} & 15.37(0.04) \\\hline
&30\%& NA & 21.27(0.10) & \textbf{15.25(0.05)} & 16.83(0.05) \\ \hline
p=200&10\%& NA & 39.43(0.09) & \textbf{37.00(0.07)} & 39.11(0.08) \\ \hline
&20\%& NA & 44.62(0.12) & \textbf{39.51(0.07)} & 42.19(0.08) \\ \hline
&30\%& NA & 53.48(0.19) &\textbf{42.41(0.07)} & 45.64(0.08) \\ \hline
p=300&10\%& NA & 65.44(0.09) & \textbf{65.24(0.07)} & 65.43(0.08) \\\hline
&20\%& NA & 72.43(0.12) & \textbf{68.97(0.06)} & 69.62(0.08) \\ \hline
&30\%& NA & 85.19(0.17) & \textbf{73.59(0.07)} & 74.19(0.09) \\\hline
\end{tabular}
\caption{Model 3 and Model 4 (weak sparsity): Average (SE) Kullback-Leibler loss of \emph{MLE},
  \emph{MeanImp}, \emph{MissRidge} and \emph{MissGLasso} with different degrees of
  missingness. Method with lowest average Kullback-Leibler loss in bold face}
\label{tab:KLloss2}
\end{table}

To assess the performance of \emph{MissGLasso} on recovering the sparsity
structure in $K$, we also report the true positive rate (TPR) and the true
negative rate (TNR) defined as 
\begin{eqnarray*}
\textrm{TPR}&=&\frac{\# \textrm{true non-zeros estimated as non-zeros}}{\# \textrm{true non-zeros}},\\
\textrm{TNR}&=&\frac{\# \textrm{true zeros estimated as zeros}}{\# \textrm{true zeros}}.
\end{eqnarray*}
These numbers are reported in Tables~\ref{tab:TPR1} and \ref{tab:TPR2}. For
visualization, we also plot in Figure~\ref{fig:heatmap} heat-maps of the percentage of times each element
was estimated as zero among the 50 simulation runs. We note that our choice
of CV-optimal $\lambda$ has a tendency to yield too many false positives
and thus too low values for TNR: in the case without missing values, this
finding is theoretically supported in \cite{meinshausen04consistent}.

\begin{table}[!p]
\centering
\tabcolsep=8pt
\begin{tabular}{|cc||c|c|}\hline
\multicolumn{2}{|c||}{\textbf{Model 1}}&TPR [\%]& TNR [\%]\\\hline
p=10&10\%& 100 (0.00) & 39.06 (1.45) \\\hline
&20\%& 100 (0.00) & 42.06 (1.32) \\ \hline
&30\%& 100 (0.00) & 43.94 (1.33) \\ \hline
p=50&10\%& 100 (0.00) & 67.78 (0.34) \\ \hline
&20\%& 100 (0.00) & 67.64 (0.39) \\ \hline
&30\%& 100 (0.00) & 69.78 (0.24) \\ \hline
p=100&10\%& 100 (0.00) & 77.05 (0.23) \\ \hline
&20\%& 100 (0.00) & 77.01 (0.24) \\ \hline
&30\%& 99.99 (0.01) & 78.75 (0.09) \\ \hline
p=200&10\%& 100 (0.00) & 83.89 (0.17) \\ \hline
&20\%& 100 (0.00) & 85.10 (0.04) \\ \hline
&30\%& 99.98 (0.01) & 85.24 (0.15) \\ \hline
p=300&10\%& 100 (0.00) & 87.36 (0.13) \\\hline
&20\%& 100 (0.00) & 88.41 (0.03) \\\hline
&30\%& 100 (0.00) & 88.44 (0.07) \\ \hline
\end{tabular}
\begin{tabular}{|cc||c|c|}\hline
\multicolumn{2}{|c||}{\textbf{Model 2}}&TPR [\%]& TNR [\%]\\\hline
p=10&10\%& 93.14 (1.06) & 21.07 (2.36) \\ \hline
&20\%& 88.46 (1.46) & 25.60 (2.59) \\ \hline
&30\%& 80.51 (1.58) & 36.13 (2.66) \\ \hline
p=50&10\%& 57.75 (0.35) & 74.13 (0.31) \\\hline
&20\%& 53.20 (0.59) & 76.50 (0.60) \\ \hline
&30\%& 49.47 (0.59) & 79.39 (0.55) \\\hline
p=100&10\%& 48.81 (0.29) & 85.01 (0.21) \\ \hline
&20\%& 46.72 (0.41) & 85.35 (0.43) \\ \hline
&30\%& 43.60 (0.25) & 86.94 (0.09) \\ \hline
p=200&10\%& 44.28 (0.13) & 90.40 (0.05) \\ \hline
&20\%& 41.40 (0.35) & 91.26 (0.30) \\ \hline
&30\%& 37.53 (0.15) & 92.41 (0.04) \\ \hline
p=300&10\%& 41.74 (0.25) & 93.21 (0.20) \\ \hline
&20\%& 39.19 (0.12) & 93.47 (0.03) \\ \hline
&30\%& 32.56 (0.16) & 96.04 (0.07) \\ \hline
\end{tabular}
\caption{Model 1 and Model 2 (strong sparsity): Average (SE) of True Positive Rate (TPR) and True
  Negative Rate (TNR) of the \emph{MissGLasso} estimator for inferring the
  zeros in $K=\Sigma^{-1}$. All numbers are percentages}
\label{tab:TPR1}
\end{table}

\begin{table}[!p]
\centering
\tabcolsep=8pt
\begin{tabular}{|cc||c|c|}\hline
\multicolumn{2}{|c||}{\textbf{Model 3}}&TPR [\%]& TNR [\%]\\\hline
p=10&10\%& 100 (0.00) & 43.15 (1.63) \\\hline
&20\%& 100 (0.00) & 44.05 (1.69) \\ \hline
&30\%& 100 (0.00) & 43.50 (1.16) \\ \hline
p=50&10\%& 99.75 (0.06) & 63.55 (0.40) \\ \hline
&20\%& 98.92 (0.14) & 64.86 (0.32) \\ \hline
&30\%& 97.22 (0.20) & 67.12 (0.27) \\ \hline
p=100&10\%& 94.52 (0.14) & 70.92 (0.08) \\ \hline
&20\%& 89.78 (0.20) & 74.47 (0.09) \\ \hline
&30\%& 82.56 (0.25) & 77.93 (0.08) \\ \hline
p=200&10\%& 73.60 (0.15) & 78.06 (0.05) \\ \hline
&20\%& 64.66 (0.17) & 81.20 (0.05) \\ \hline
&30\%& 54.49 (0.17) & 84.17 (0.05) \\ 
\hline
p=300&10\%& 61.19 (0.10) & 82.35 (0.03) \\ \hline
&20\%& 52.47 (0.10) & 84.91 (0.03) \\ \hline
&30\%& 43.19 (0.12) & 87.31 (0.03) \\ \hline
\end{tabular}
\begin{tabular}{|cc||c|c|}\hline
\multicolumn{2}{|c||}{\textbf{Model 4}}&TPR [\%]& TNR [\%]\\\hline
p=10&10\%& 100 (0.00) & 26.50 (1.60) \\ \hline
&20\%& 100 (0.00) & 24.42 (1.45) \\ \hline
&30\%& 99.38 (0.25) & 26.58 (1.68) \\ \hline
p=50&10\%& 80.29 (0.29) & 34.35 (0.36) \\ \hline
&20\%& 72.78 (0.42) & 39.88 (0.43) \\ \hline
&30\%& 64.12 (0.51) & 46.31 (0.51) \\ \hline
p=100&10\%& 54.33 (0.40) & 53.67 (0.39) \\ \hline
&20\%& 47.54 (0.36) & 58.91 (0.37) \\ \hline
&30\%& 40.13 (0.29) & 65.02 (0.31) \\ \hline
p=200&10\%& 36.65 (0.22) & 67.62 (0.22) \\ \hline
&20\%& 31.39 (0.23) & 72.01 (0.23) \\ \hline
&30\%& 26.81 (0.25) & 76.13 (0.25) \\ \hline
p=300&10\%& 26.73 (0.35) & 75.64 (0.34) \\ \hline
&20\%& 23.35 (0.32) & 78.53 (0.32) \\ \hline
&30\%& 20.62 (0.13) & 80.99 (0.14) \\ \hline
\end{tabular}
\caption{Model 3 and Model 4 (weak sparsity): Average (SE) of True Positive Rate (TPR) and True
  Negative Rate (TNR) of the \emph{MissGLasso} estimator for inferring the
  zeros in $K=\Sigma^{-1}$. All numbers are percentages}
\label{tab:TPR2}
\end{table}

\begin{figure}[!p]
 \includegraphics[scale=0.78]{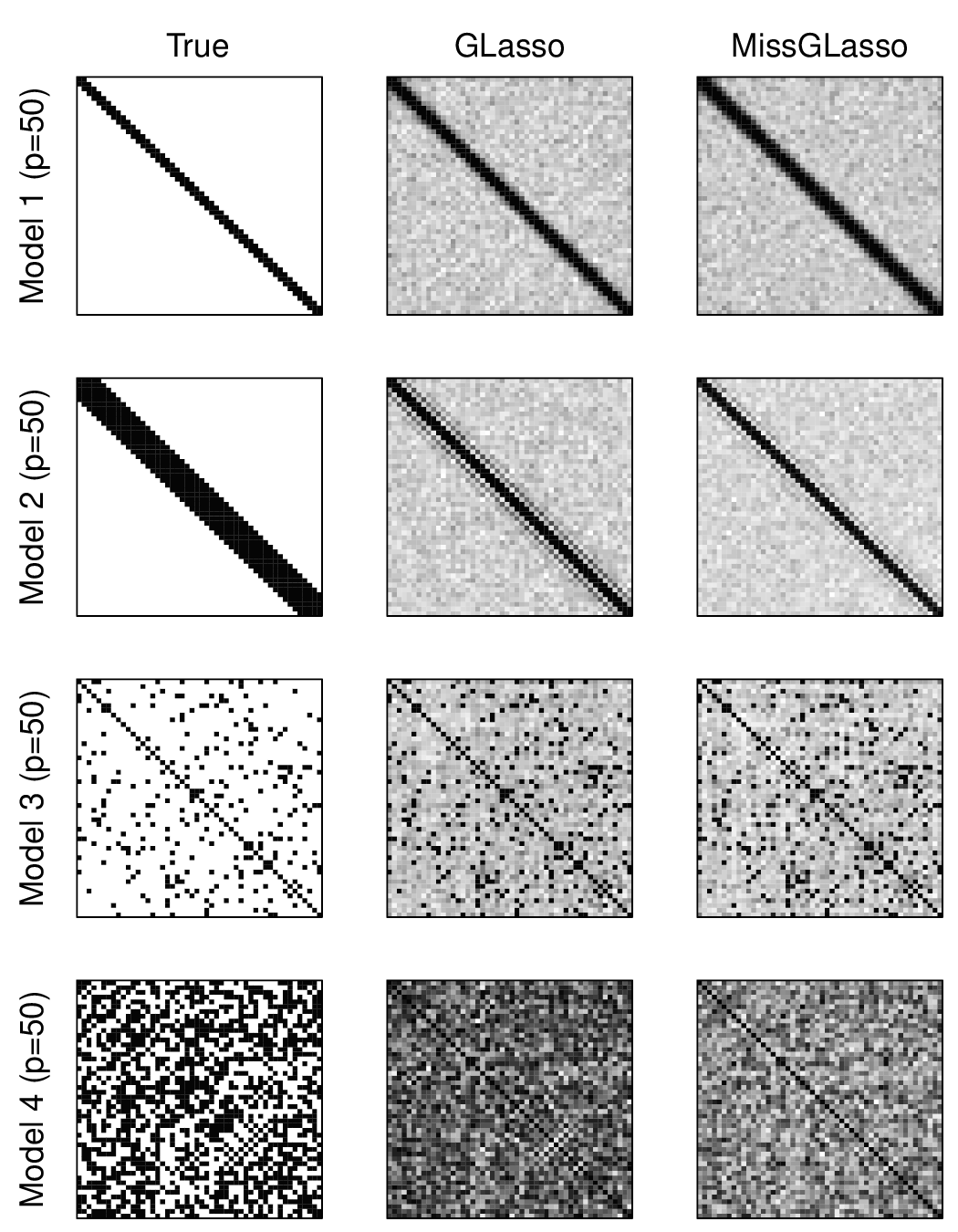} 
  \caption[]
  {Heat-maps of the identified zeros in the concentration matrix $K$ among 50
    simulation runs of models 1-4 with $p=50$. White color stands for zero in each of the 50 simulation
    runs. Black stands for non-zero in all runs. Left column: True
    concentration matrix. Middle column: Concentration matrix from GLasso
    applied on complete data. Right column: Concentration matrix from
    \emph{MissGLasso} applied on data with $30\%$ of the values missing}
\label{fig:heatmap}
\end{figure}

Finally, we comment on initialization and computational timings of the
\emph{MissGLasso}. In the above simulation we used the \emph{MeanImp}
solution as starting values $(\mu^{(0)},K^{(0)})$ for the \emph{MissGLasso}. For a
typical realization of model 2 with $p=100$, $30\%$ missing data and a prediction optimal tuned
parameter $\lambda$, our algorithm converges in $3.58$
seconds and $19$ EM-iterations. All computations were carried out with the
statistical computing language and environment \textbf{R} on a AMD
Phenom(tm) II X4 925 processor with 800 MHz cpu and 7.9 GB memory.

\subsubsection{Simulation 2: MissGLasso under MCAR, MAR and NMAR}\label{sec:simulation2}
In the simulation of Section~\ref{sec:simulation1} the missing values are
produced completely at random (MCAR), i.e., missingness does not depend on the
values of the data. As mentioned in Section~\ref{sec:missglasso} the
\emph{MissGLasso} is based on a weaker assumption, namely that the
data are missing at random (MAR), in the sense that the probability that a
value is missing may depend on the observed values but does not depend on
the missing values. A missing data mechanism where missingness depends also
on the missing values is called not missing at random (NMAR), see for
example \cite{LittleRubin}. In this
section we will show exemplarily that our method performs differently under
the MCAR, MAR and NMAR assumption.

We consider a Gaussian model with $p=30$, $n=100$ and with a block-diagonal
covariance matrix
\[
\Sigma=\begin{bmatrix}
  B & 0& \cdots&0\\
  0 & B & &\vdots\\
  \vdots& & \ddots&0\\
  0& \cdots & 0&B
\end{bmatrix},\qquad B=\biggl(\begin{smallmatrix} 1 & 0.7&0.7^{2} \\ 0.7 &
  1&0.7\\0.7^{2}&0.7&1 \end{smallmatrix}\biggl).
\]
Note that the concentration matrix $K$ is again block-diagonal and
therefore a sparse matrix.

We now delete values from the training data according to the following missing
data mechanisms:
\begin{itemize}
\item[1.]for all $b=1,\ldots,10$ and $i=1,\ldots,n$: 
\[\mathbf{x}_{i,3\cdot b}\quad \textrm{is missing if}\quad \eta_{i,b}=1,\]
where $\eta_{i,b}$ are i.i.d. Bernoulli random variables taking value $1$ with
probability $\pi$ and 0 with probability $1-\pi$.
\item[2.]for all $b=1,\ldots,10$ and $i=1,\ldots,n$: 
\[\mathbf{x}_{i,3\cdot b} \quad \textrm{is missing if}\quad
\mathbf{x}_{i,3\cdot b-2}<T.\]
\item[3.]for all $b=1,\ldots,10$ and $i=1,\ldots,n$: 
\[\mathbf{x}_{i,3\cdot b}\quad  \textrm{is missing if}\quad
\mathbf{x}_{i,3\cdot b}<T.\]
\end{itemize}

In all mechanisms the first and second variable of each block are
completely observed. Only the third variable of each block has missing
values. Mechanism 1 is clearly MCAR, mechanism 2 is MAR and mechanism 3 is NMAR. The probability
$\pi$ and the truncation constant $T$ determine the amount of missing values. In our
simulation we use three different degrees of missingness: (a) $\pi=0.25$,
$T=\Phi^{-1}(0.25)$, (b) $\pi=0.5$, $T=\Phi^{-1}(0.5)=0$ and (c)
$\pi=0.75$, $T=\Phi^{-1}(0.75)$. Here, $\Phi(\cdot)$ is the standard normal cumulative
distribution function. Setting (a) results in about $8\frac{1}{3}\%$, (b)
in $16\frac{2}{3}\%$ and (c) in $25\%$ missing data. In Figure~\ref{fig:simmar}, box-plots of
the Kullback-Leibler loss over 50 simulation runs are shown. As expected we see
that \emph{MissGLasso} performs worse in the NMAR case. This observation is
more pronounced for larger percentages of missing data.

\begin{figure}[!t]
\begin{centering}
 \includegraphics[scale=0.73]{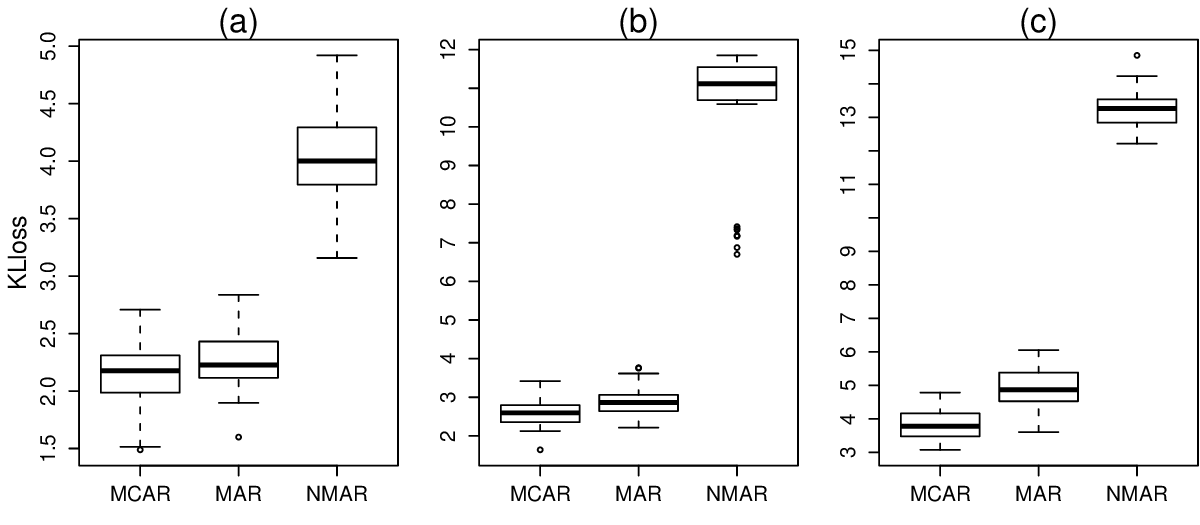} 
  \caption[]
  {Kullback-Leibler loss over 50 simulation runs for different missing data
    mechanisms (MCAR, MAR, NMAR) and
  different degrees of missingness: (a) $\pi=25\%$, $T=\Phi^{-1}(0.25)$, (b)
  $\pi=50\%$, $T=0$, (c) $\pi=75\%$, $T=\Phi^{-1}(0.75)$}
\label{fig:simmar}
\end{centering}
\end{figure}

\subsubsection{Simulation 3: BIC and cross-validation}
So far, we tuned the parameter $\lambda$ by minimizing twice the negative
log-likelihood (log-loss) on validation data. However, in practice, it
is more appropriate to use cross-validation or the BIC criterion
presented in Section~\ref{sec:tuning}.

Figure~\ref{fig:simtune} shows the Kullback-Leibler loss, the true positive
rate and the true negative rate for the \emph{MissGLasso} applied on model
1 with $p=50$. We see from the plots that cross-validation and tuning using
additional validation
data of size $100$ lead to very similar results. On the other hand BIC performs inferior
in terms of Kullback-Leibler loss, but slightly better regarding the true
negative rate.
\begin{figure}[!t!h]
\begin{centering}
 \includegraphics[scale=0.8]{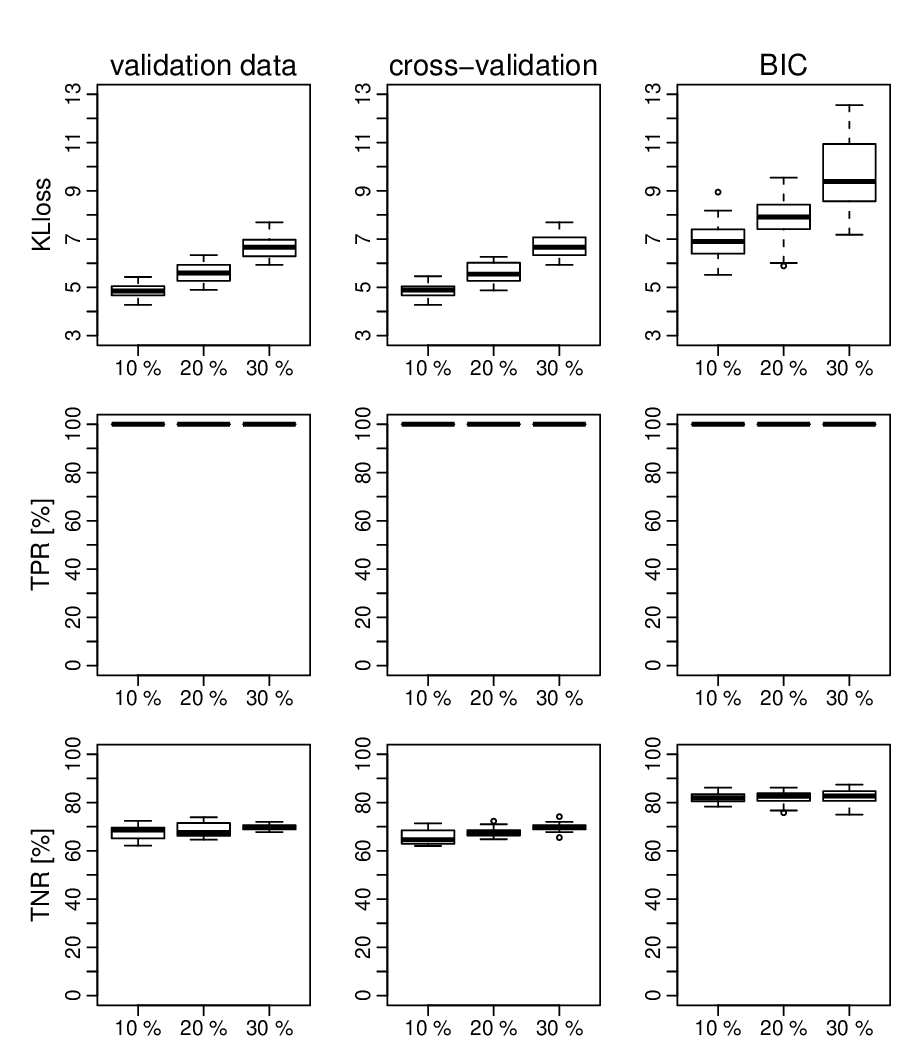} 
  \caption[]
  {KLloss, TPR, TNR of the \emph{MissGLasso} estimator tuned with either additional validation data, cross-validation or BIC. Model 1 with $p=50$, $n=100$
    and $10\%-30\%$ missing values, based on $50$ simulation runs}
\label{fig:simtune}
\end{centering}
\end{figure}
\subsubsection{Scenario 4: Isoprenoid gene network in Arabidopsis thaliana}\label{sec:isoprenoid}
For illustration, we apply our approach for modeling the isoprenoid gene
network in Arabidopsis thaliana. The number of genes in the network is
$p=39$. The number of observations, corresponding to different experimental
conditions, is $n=118$. More details about the data can be found in
\cite{wille}. The dataset is completely observed. Nevertheless, we produce
missing values completely at random and examine the performance of \emph{MissGLasso}. 
We consider the following experiments.

\paragraph{First experiment: Predictive performance in terms of log-loss.}
Besides \emph{MissGLasso}, \emph{MeanImp} and \emph{MissRidge} we consider here
a fourth method based on K-nearest neighbors imputation
\citep{knnimpute2001}. For the latter we impute the missing values by
K-nearest neighbors imputation and then we estimate the inverse covariance
by using GLasso on the imputed data. The number of nearest neighbors is chosen in
advance in order to obtain minimal imputation error.

Based on the original data we create 50 datasets by deleting (completely at
random) each time
$30\%$ of the values. For each of these datasets we compute a
10-fold cross-validation error as follows: We split the dataset into
10 equal-sized parts. We fit for various $\lambda$-values the different
estimators on every nine tenth of
the (incomplete) dataset and evaluate the prediction error (based on out-sample negative
log-likelihood) on the left-out part of the
original (complete) data. The cross-validation error (cv error) is then the average over
the 10 different prediction errors for an optimal $\lambda$-value. The
box-plots in the left panel of Figure~\ref{fig:realdata} show the cv errors
over the 50 datasets. \emph{MissGLasso}, \emph{MissRidge} and \emph{KnnImp} lead to a
significant gain in prediction accuracy over \emph{MeanImp}. In this example
\emph{MissRidge} performs best.   

\paragraph{Second experiment: Edge selection.}
First, we select using the GLasso on the original (complete) data (prediction optimal tuned) the twenty
most important edges according to the estimated partial
  correlations given by
\begin{equation*}
\hat{\rho}_{jj'|\mathrm{rest}}=\frac{|\hat{K}_{jj'}|}{\sqrt{\hat{K}_{jj}\hat{K}_{j'\!j'}}},\quad j,j'=1,\ldots,p.
\end{equation*}
Then, we create 50 datasets by producing completely at random $m\%$ missing
  values and select using the \emph{MissGLasso} for each of the 50 datasets the twenty most
  important edges according to the partial correlations
  $\hat{\rho}_{jj'|\mathrm{rest}}$. We do this for $m=5, 10, 15, 20, 25,
30$. 
Finally, we identify the overlap of the selected edges without missing values and of
  the selected edges with $m\%$ missing data. The box-plots in the right panel of Figure~\ref{fig:realdata}
  visualize the size of this overlap. Even with $30\%$ missing data, the
  \emph{MissGLasso} detects about 13 of the twenty most important edges of
  the complete data.

\begin{figure}[!t!h]
  \includegraphics[scale=0.8]{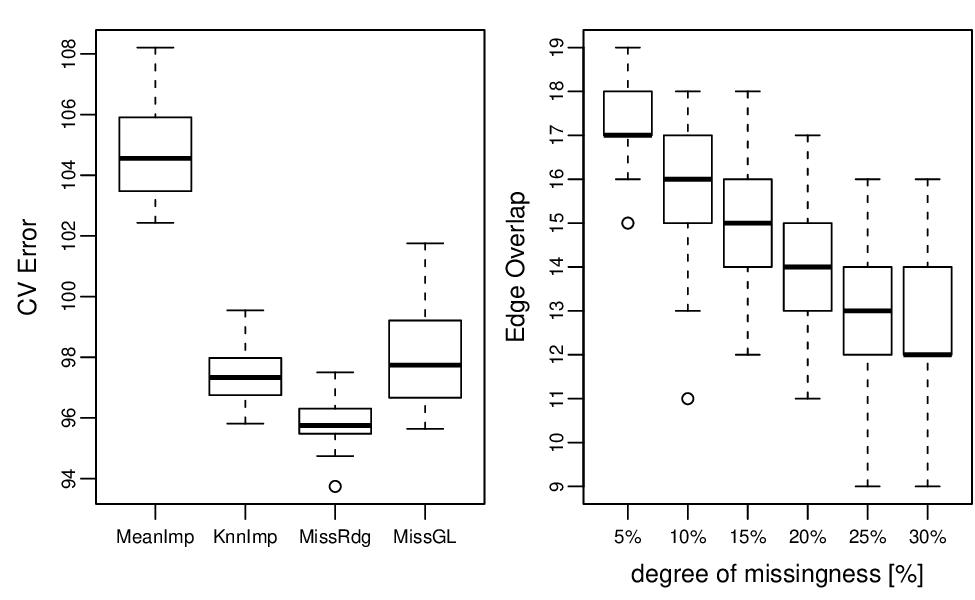} 
  \caption[]
  {Arabidopsis thaliana data ($n=118, p=39$). Left panel: Cross-validation error of \emph{MeanImp},
    \emph{KnnImp}(=K-nearest neighbors imputation followed by the GLasso),
    \emph{MissRidge} and \emph{MissGLasso} over $50$ datasets. For each
    dataset, $30\%$ of the original data are deleted. Right panel:
  Box-plots of the overlap of the twenty most important edges from GLasso
  and \emph{MissGLasso} with and without
  missing values over 50 datasets}
\label{fig:realdata}
\end{figure}

\subsection{Simulations for sparse regression}
\subsubsection{Simulation 1}
In this section we will explore the performance of the two-stage likelihood
method developed in Section~\ref{sec:two-stage}. In particular, we compare
our new method with alternative ways of treating high-dimensional regression with
missing values. 

Consider the Gaussian linear model
\begin{eqnarray*}
 &&Y_i=\beta^{T}X_i+\epsilon_i,\quad i=1,\ldots,n,\\
&&\epsilon_1,\ldots,\epsilon_n \quad\textrm{i.i.d.} \sim \calN(0,\sigma^{2}), \nonumber 
\end{eqnarray*}
where the covariates $X_{i}\in \R^{p}, i=1,\ldots,n$, are either fixed or
i.i.d. $\sim \calN(0,\Sigma)$. In all simulations training- and validation
data are generated from this model. Assuming that there are missing values
only in the $\mathbf{x}$ matrix of the training data we apply one of the
following methods:
\begin{itemize}
\item \emph{MeanImp:}
Impute the missing values by their corresponding column means. Then
apply the Lasso-estimator (\ref{eq:lasso}) on the imputed data.
\item \emph{KnnImp:} 
Impute the missing values by the K-nearest neighbors imputation
method \citep{knnimpute2001}. Then apply the Lasso on
the imputed data.
\item \emph{MissGLImp:}
Compute $(\hat{\mu},\hat{K})$ with the \emph{MissGLasso} estimator. Then,
use this estimate to impute the missing values by conditional mean
imputation, i.e., replace the missing values in observation $i$ by
$$\hat{x}_{\mathrm{mis},i}:=\E[x_{\mathrm{mis},i}|x_{\mathrm{obs},i},\hat{\mu},\hat{K}].$$ Finally, apply
the Lasso on the imputed data. 
\item \emph{Miss2stg:} This is the method introduced in
  Section~\ref{sec:two-stage}.
(\textbf{1st stage}: solve the \emph{MissGLasso} problem; \textbf{2nd
  stage}: estimate $\beta$ and $\sigma$ by minimizing a penalized negative
log-likelihood, see Equation (\ref{eq:2ndstage}), where we fixed $\mu$ and $K$ in the likelihood at the values
  from the 1st stage; initialization of EM with $\beta\equiv 0$ and
  $\sigma^{2}=\textrm{empirical variance of $\mathbf{y}$}$)
\end{itemize} 

All methods, except for \emph{MeanImp}, involve two tuning
parameters. Regarding the first parameter, the number of nearest neighbors in
\emph{KnnImp} or the regularization parameter for the \emph{MissGLasso} are chosen by
cross-validation on the training data. The second tuning parameter in the
Lasso or in the 2nd stage of the \emph{Miss2stg} approach, respectively, are chosen to minimize
the prediction error on the validation data.

To assess the performances of all methods we use the L2-distance between the
estimate $\hat{\beta}$ and the true parameter $\beta$,
$\|\hat{\beta}-\beta\|^{2}_2$.
\paragraph{First experiment:}
\begin{description}
\item[Model 5:]  $p=8$, $\Sigma_{jj'}=\tau^{|j-j'|}$ and
  $\beta$=(3,1.5,0,0,2,0,0,0).
\end{description}
We focus on four different versions of this model with different
combinations of $n/\tau/\sigma$, namely $20/0.5/3$;
$40/0.5/1$; $40/0.95/1$; $100/0.5/0.5$. The values $n/\tau/\sigma =
20/0.5/3$ correspond to the model which
was considered in the original Lasso paper \citep{tibshirani96regression}.

The box-plots in Figure~\ref{fig:mod0} of the L2-distances,
summarize the performance of the different methods for different
combinations $n/\tau/\sigma$. In this experiment, 20\% of the
training data were deleted completely at random. For reference, we added a
box-plot for the L2-distances for the Lasso carried out on complete data,
i.e., before deleting 20\% in the training data.

For the model from the original Lasso paper, namely the combination $n/\tau/\sigma=20/0.5/3$, we see
that the Lasso on complete data does not perform substantially better than
simple mean imputation on data with 20\% of the values removed. This is due to
the high noise level in this model. By increasing
$n$ and/or scaling down $\sigma$, we reduce the noise level and increase
the signal in the data. Indeed, in the setup
$n/\tau/\sigma=40/0.5/1$, the analysis with complete data performs now much
better than all analyses carried out on data with missing
values. We also see that the \emph{Miss2stg} method is slightly better
than the other methods. In the setup $n/\tau/\sigma=40/0.95/1$ we increase the correlation between the covariates by
setting $\tau$ from 0.5 to 0.95 and we notice that now \emph{KnnImp},
\emph{MissGLImp} and \emph{Miss2stg} outperform the ``naive''
\emph{MeanImp} which ignores the correlation among the different variables
in the imputation step. 
Finally in the
last setup, $n/\tau/\sigma=100/0.5/0.5$, where $n$ is increased and $\sigma$ is reduced
again, the \emph{Miss2stg} method is much better than the other
methods. Thus, for the cases considered where missing data imply a clear
information loss (e.g., when the difference between complete and mean
imputed data is large), the new two-stage procedure is best. 

\begin{figure}[!t!h]
  \includegraphics[scale=0.62]{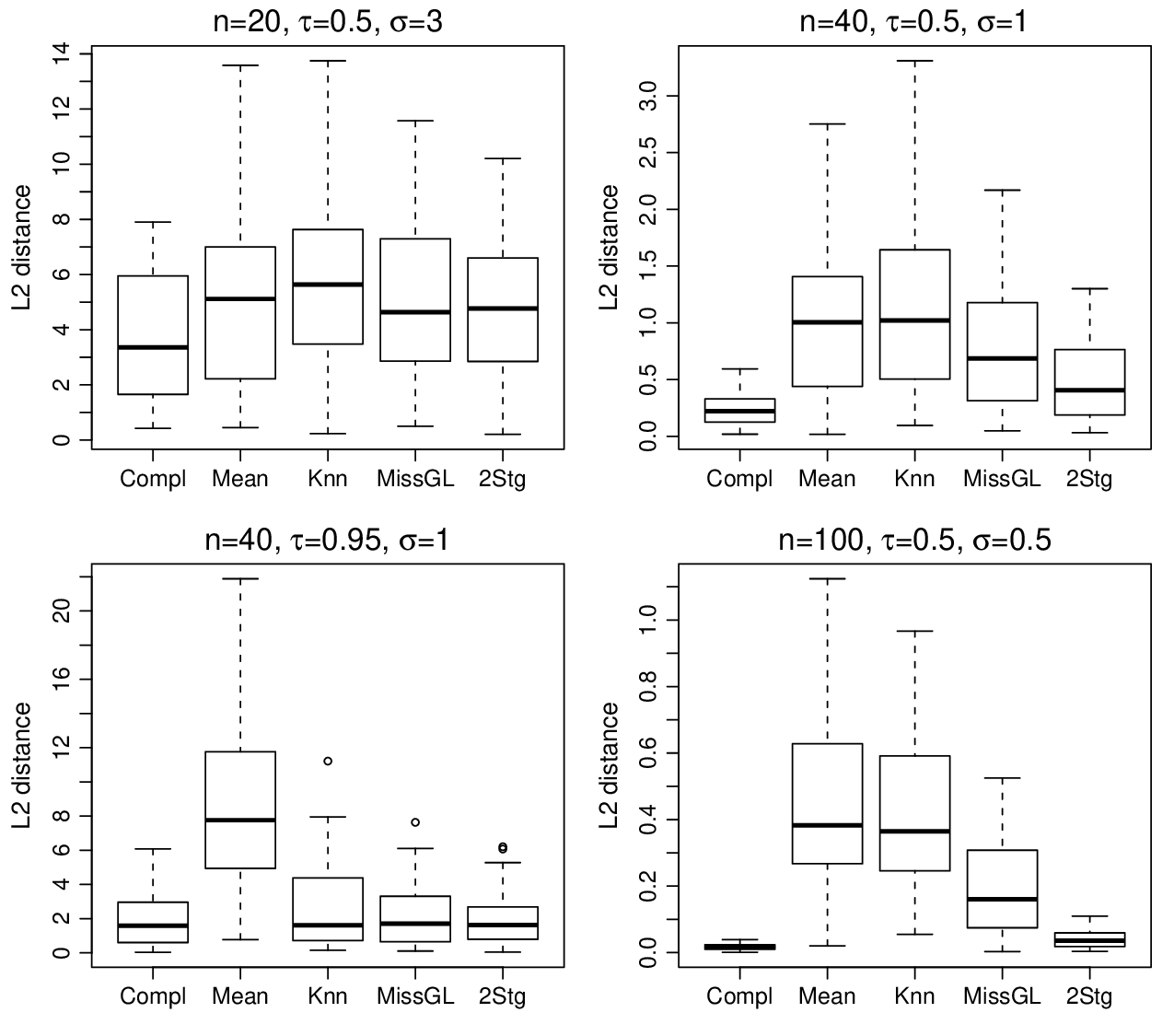} 
  \caption[]
  {Model 5. Box-plots of the L2-distances for different values for $n,\tau$ and
  $\sigma$ over 50 simulation runs with 20\% of the training data deleted
  completely at random. \emph{Compl}: Lasso on complete data (before deleting 20\%
  of the data). \emph{Mean(=MeanImp)}: Mean imputation followed by the
  Lasso. \emph{Knn(=KnnImp)}: Knn imputation followed by the
  Lasso. \emph{MissGL(=MissGLImp)}:
  MissGLasso and conditional mean imputation followed by the
  Lasso. \emph{2stg(=Miss2stg)}: Two-stage likelihood approach introduced in Section~\ref{sec:two-stage}}
\label{fig:mod0}
\end{figure}

\paragraph{Second experiment:}
Consider the following models:
\begin{description}
\item[Model 6:]  $n=100$; $p=50$ and $p=200$; $\Sigma_{jj'}=0.8\times\mathrm{I}_{(j,j'\leq 9)}$ for $j\neq j'$, and
  $\Sigma_{jj}=1$; $\beta_j=2$ for $j=1,\ldots,8$ and zero
  elsewhere; $\sigma=0.5$.\vspace{0.15cm}
\item[Model 7:] $n=100$; $p=50$ and $p=200$;
  $\Sigma_{jj'}=\mathrm{I}_{(j=j')}$; $\beta=(3,1.5,0,0,2,0,0,0,\ldots)$; $\sigma=0.5$.\vspace{0.15cm}
\item[Model 8:]  $n=118$; $p=39$; $\textbf{x}$: data from isoprenoid gene
network in Arabidopsis thaliana (see Section~\ref{sec:isoprenoid});
$\beta_j=2$ for $j=1,2,3$ and zero elsewhere; $\sigma=0.5$.
\end{description}  
We delete 10\%, 20\% and 30\% of the training data
completely at random. The results (L2-distances) are reported in
Table~\ref{tab:sim2stage}. We read off from this table, that the
\emph{Miss2stg} method performs best in all three models. We further notice that in
model 7, \emph{KnnImp} and \emph{MissGLImp} do not perform better than
simple \emph{MeanImp} whereas \emph{Miss2stg} works much better than all other methods. The
explanation is that \emph{KnnImp} and \emph{MissGLImp} use the information
present in the covariance matrix of $X$, which is the identity matrix for
model 7, for imputation. On the other hand, our two-stage likelihood approach involves the joint
distribution of $(Y,X)$ which seems to be the main reason for its better performance.



\begin{table}[!h]
\tabcolsep=1.5pt
\begin{center}
\begin{tabular}{|cr||c|c|c|c|}
  \hline
 \multicolumn{2}{|c||}{\textbf{Model 6}}& MeanImp & KnnImp & MissGLImp & Miss2stg \\ 
  \hline

  $p=50$&$10\%$ & 2.59(0.18)& 1.22(0.12)& 0.42(0.04)&\textbf{ 0.32(0.02)}\\ \hline
  &$20\%$ & 5.87(0.56)& 2.88(0.23)& 1.16(0.11)& \textbf{0.96(0.08)}\\ \hline
  &$30\%$ & 7.05(0.47)& 5.61(0.45)& 2.03(0.18)&\textbf{ 1.46(0.10)}\\\hline
 $p=200$&$10\%$ & 2.55(0.23)& 2.22(0.20)& 0.49(0.04)&\textbf{ 0.48(0.04)}\\ \hline
  &$20\%$ & 5.44(0.44)& 5.16(0.42)& \textbf{1.20(0.10)}& 1.23(0.08)\\ \hline
  &$30\%$ & 8.10(0.65)& 7.63(0.59)& 2.00(0.18)& \textbf{1.67(0.11)}\\\hline
\multicolumn{5}{c}{}\\
\hline
\multicolumn{2}{|c||}{\textbf{Model 7}}& MeanImp & KnnImp & MissGLImp & Miss2stg \\ 
  \hline

  $p=50$&$10\%$& 0.22(0.02) & 0.25(0.02) & 0.22(0.02) & \textbf{0.05(0.00)} \\ \hline
  &$20\%$ & 0.56(0.05) & 0.63(0.06) & 0.56(0.05) & \textbf{0.09(0.01)} \\ \hline
  &$30\%$ & 0.77(0.05) & 0.92(0.06) & 0.80(0.05) & \textbf{0.13(0.01)} \\ \hline
 $p=200$&$10\%$& 0.41(0.04) & 0.41(0.03) & 0.43(0.04) & \textbf{0.09(0.01)} \\ \hline
  &$20\%$ & 0.80(0.06) & 0.81(0.06) & 0.86(0.07) & \textbf{0.15(0.02)} \\ \hline
  &$30\%$ & 1.38(0.10) & 1.42(0.10) & 1.44(0.11) & \textbf{0.57(0.08)} \\  \hline
\multicolumn{5}{c}{}\\
\hline
\multicolumn{2}{|c||}{\textbf{Model 8}}& MeanImp & KnnImp & MissGLImp &Miss2stg \\\hline 
\multicolumn{2}{|c||}{$10\%$}& 1.59(0.15)& 0.49(0.06)& 0.29(0.04)& \textbf{0.13(0.02)}\\ \hline
 \multicolumn{2}{|c||}{$20\%$}& 3.04(0.17)& 1.37(0.13)& 0.66(0.06)&\textbf{ 0.25(0.03)}\\ \hline
  \multicolumn{2}{|c||}{$30\%$}& 4.29(0.22)& 2.38(0.15)& 1.30(0.12)& \textbf{0.62(0.06)}\\\hline
\end{tabular}

\caption{Models 6-8: Average (SE) L2-distance of \emph{MeanImp}, \emph{KnnImp},
  \emph{MissGLImp} and \emph{Miss2stg} with different degrees of missingness}
\label{tab:sim2stage}
\end{center}
\end{table}

\subsubsection{Scenario 2: Riboflavin production in Bacillus Subtilis}
We finally illustrate the proposed two-stage likelihood approach
on a real dataset of riboflavin (vitamin B$_2$) production by
\emph{Bacillus Subtilis}. The data has been provided by DSM
(Switzerland). The real-valued response variable is the logarithm of the
riboflavin production rate. There are $p=4088$ covariates (genes) measuring the
logarithm of the expression level of $4088$ genes and measurements of
$n=146$ genetically engineered mutants of Bacillus Subtilis. We compare the
estimators \emph{MeanImp}, \emph{KnnImp}, \emph{MissGLImp} and
\emph{Miss2stg} by carrying out a
cross-validation analysis as in the first experiment of
Section~\ref{sec:isoprenoid}. Here, we use the squared error loss
$(y-\beta^{T}x)^2$ to evaluate the prediction errors. To keep the
computational effort reasonable, we use only the $100$ covariates (genes)
exhibiting the highest empirical variances. The cv errors over
$50$ datasets (for each dataset, $30\%$ of the complete gene
expression matrix are deleted completely at random) are shown in
Figure~\ref{fig:riboflavin}. \emph{MeanImp} is worst. Our \emph{Miss2stg} performs
slightly better than \emph{KnnImp} and \emph{MissGLImp}.

\begin{figure}[!h]
\begin{centering}
  \includegraphics[scale=0.8]{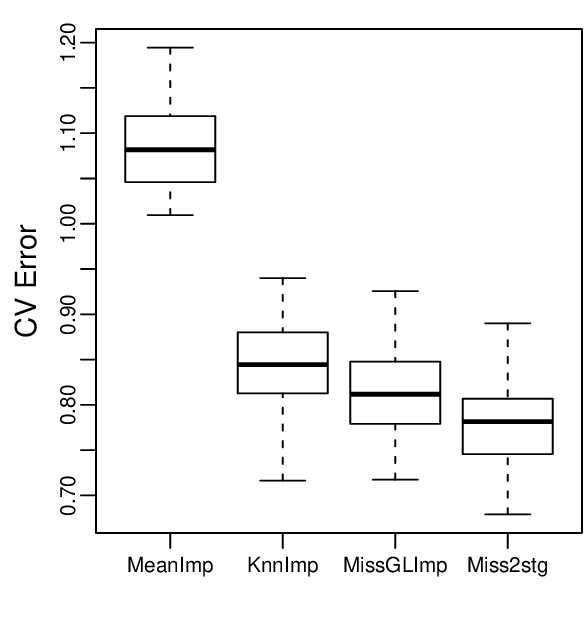} 
  \caption[]
  {Cross-validated prediction error $(y-\beta^{T}x)^2$ of \emph{MeanImp}, \emph{KnnImp},
    \emph{MissGLImp} and \emph{Miss2stg} over 50 datasets, where for each
    dataset 30\% of the riboflavin data are deleted
}
\label{fig:riboflavin}
\end{centering}
\end{figure}

\section{Discussion}
We presented an $\ell_{1}$-penalized (negative) log-likelihood method for estimating the
inverse covariance matrix in the multivariate normal model in presence of
missing data. Our method is based on the observed likelihood and therefore
works in the missing at random (MAR) setup which is more general than the
missing completely at random (MCAR) framework. As argued in
Section~\ref{sec:simulation2}, the method cannot handle missingness pattern
which are not at random (NMAR), i.e., "systematic" missingness.
For optimization, we use a simple and efficient EM algorithm which works in
a high-dimensional setup and which can cope with high degrees of missing
values. In sparse settings, the method works substantially better than $\ell_2$-regularization. In Section~\ref{sec:sparseregr}, the methodology was extended for
high-di\-men\-sion\-al regression with missing values in the covariates. We
developed a two-stage likelihood approach which was found to be never worse
but sometimes much better than K-nearest neighbors or using the
straightforward imputation with a penalized
covariance (and mean) estimate from incomplete data.

\begin{appendix}
\section{Proofs} 
\begin{proof}[Proposition \ref{prop:convergence}]
Denote by $f_{c}(\mathbf{x}|\mu,K)$ the multivariate Gaussian density of the
complete data. $f_{\mathrm{obs}}(\mathbf{x}_{\mathrm{obs}}|\mu,K)$ the density of the observed data. Furthermore, the conditional density of
the complete data given the observed data is $$k(\mathbf{x}|\mathbf{x}_{\mathrm{obs}},\mu,K)=f_{c}(\mathbf{x}|\mu,K)/f_{\mathrm{obs}}(\mathbf{x}_{\mathrm{obs}}|\mu,K).$$ The penalized
observed log-likelihood (\ref{eq:plik2}) fulfills the equation
\begin{eqnarray}\label{eq:eq1}
-\ell_{\mathrm{pen}}(\mu,K)&=&-\log
  f_{\mathrm{obs}}(\mathbf{x}_{\mathrm{obs}}|\mu,K)+\lambda\|K\|_{1}\nonumber\\
&=&\mathop{Q}(\mu,K|\mu',K')-\mathop{H}(\mu,K|\mu',K'),
\end{eqnarray}
where
\begin{eqnarray*}
\mathop{Q}(\mu,K|\mu',K')&=&-\E[\ell(\mu,K;\mathbf{x})|\mathbf{x}_{\mathrm{obs}},\mu',K']+\lambda\|K\|_{1}\\
\mathop{H}(\mu,K|\mu',K')&=&-\E[\log k(\mathbf{x}|\mathbf{x}_{\mathrm{obs}},\mu,K)|\mathbf{x}_{\mathrm{obs}},\mu',K'].
\end{eqnarray*}
By Jensen's inequality we get the following important relationship:
\begin{eqnarray}\label{eq:eq2}
\mathop{H}(\mu,K|\mu',K')\geq\mathop{H}(\mu',K'|\mu',K'),
\end{eqnarray}
see also \cite{wu}. $\ell_{\mathrm{pen}}(\mu,K)$, $\mathop{Q}(\mu,K|\mu',K')$ and
$\mathop{H}(\mu,K|\mu',K')$ are all continuous functions in all
arguments. Further, $\mathop{H}(\mu,K|\mu',K')$ is differentiable as a
function of $(\mu,K)$. If we think of $\mathop{Q}(\mu,K|\mu',K')$ and
$\mathop{H}(\mu,K|\mu',K')$ as functions of $(\mu,K)$ we write also
$\mathop{Q}_{(\mu',K')}(\mu,K)$ and $\mathop{H}_{(\mu',K')}(\mu,K)$.

Let $\theta^{m}=(\mu^{(m)},K^{(m)})$ be the sequence generated by the
EM algorithm. We need to prove that for a converging subsequence
$\theta^{m_j} \rightarrow \bar{\theta}$ ($j\to\infty$) the directional
derivative 
$-\ell'_{\mathrm{pen}}(\bar{\theta};d)$ is bigger or equal to zero for all directions
$d$ (\cite{tseng}). Taking directional derivatives
of Equation (\ref{eq:eq1}) yields
\[-\ell'_{\mathrm{pen}}(\bar{\theta};d)=\mathop{Q'_{\bar{\theta}}}(\bar{\theta};d)-\langle\nabla\mathop{H_{\bar{\theta}}}(\bar{\theta}),d\rangle.\]
Note that $\nabla\mathop{H_{\bar{\theta}}}(\bar{\theta})=0$ as
$\mathop{H}_{\bar{\theta}}(x)$ is minimized for $x=\bar{\theta}$ (Equation~(\ref{eq:eq2})). Therefore, it remains to show that $\mathop{Q'_{\bar{\theta}}}(\bar{\theta};d)\geq0$.
From the descent property of the algorithm (Equation (\ref{eq:eq1}) and
(\ref{eq:eq2})) we have:
\begin{equation}\label{eq:eq3}
-\ell_{\mathrm{pen}}(\theta^{0})\!\geq -\ell_{\mathrm{pen}}(\theta^{1})\!\geq\cdots\geq \!-\ell_{\mathrm{pen}}(\theta^{m})\geq -\ell_{\mathrm{pen}}(\theta^{m+1}).
\end{equation}
Equation (\ref{eq:eq3}) and the converging subsequence imply that $\{\ell_{\mathrm{pen}}(\theta^{m});m=0,1,2,\ldots\}$ converges to
$\ell_{\mathrm{pen}}(\bar{\theta})$. Further we have :
\begin{eqnarray*}
0\leq
\mathop{Q_{\theta^{m}}}(\theta^{m})-\mathop{Q_{\theta^{m}}}(\theta^{m+1})&=&-\ell_{\mathrm{pen}}(\theta^{m})+\ell_{\mathrm{pen}}(\theta^{m+1})+\underbrace{\mathop{H_{\theta^{m}}}(\theta^{m})-\mathop{H_{\theta^{m}}}(\theta^{m+1})}_{\leq 0}\nonumber\\
&\leq
&\underbrace{-\ell_{\mathrm{pen}}(\theta^{m})+\ell_{\mathrm{pen}}(\theta^{m+1})}_{\xrightarrow{m\to
    \infty}-\ell_{\mathrm{pen}}(\bar{\theta})+\ell_{\mathrm{pen}}(\bar{\theta})=0}.
\end{eqnarray*}
The first inequality follows from the definition of the M-Step. We conclude
\begin{equation}\label{eq:eq4}
\mathop{Q_{\theta^{m}}}(\theta^{m})-\mathop{Q_{\theta^{m}}}(\theta^{m+1})\xrightarrow{m\to\infty}0.
\end{equation}
In each M-Step we minimize the function $\mathop{Q_{\theta^{m}}}(x)$ with
respect to $x$. Therefore we have:
\begin{equation}\label{eq:eq5}
\underbrace{\mathop{Q_{\theta^{m_{j}}}}(\theta^{m_{j}+1})-\mathop{Q_{\theta^{m_{j}}}}(\theta^{m_{j}})}_{\xrightarrow{j\to \infty}
  0
 \quad(\ref{eq:eq4})}+\underbrace{\mathop{Q_{\theta^{m_{j}}}}(\theta^{m_{j}})}_{\xrightarrow{j\to
   \infty}\mathop{Q_{\bar{\theta}}}(\bar{\theta})}\leq\underbrace{\mathop{Q_{\theta^{m_{j}}}}(x)}_{\xrightarrow{j\to
   \infty}\mathop{Q_{\bar{\theta}}}(x)}.
\end{equation}
Using continuity, Equation (\ref{eq:eq4}) and Equation (\ref{eq:eq5}) we get
\[\mathop{Q_{\bar{\theta}}}(\bar{\theta})\leq\mathop{Q_{\bar{\theta}}}(x)
\qquad\forall x\]
and therefore, we have proven that
$\mathop{Q'_{\bar{\theta}}}(\bar{\theta};d)\geq0$ for all directions~$d$.
\end{proof}

\begin{proof}[Proposition \ref{prop:convergence2}]
The result follows from Proposition 5.1 and Lemma 3.1 in \cite{tseng}.
\end{proof}

\begin{proof}[Lemma \ref{lemma:mvn}]
We have
\begin{eqnarray}\label{eq:joint1}
(\epsilon_i,X_i)\!\sim\!\calN\left(\!(0,\mu),\left(\!\begin{array}{cc}\sigma^{2}&0\\
0&\Sigma\end{array}\!\right)\!\right)\,\textrm{and}\,\left(\!\begin{array}{c}Y_i\\X_i\end{array}\!\right)\!=\!\left(\begin{array}{cc}1&\beta^{T}\\
0&1\end{array}\right)\!\left(\!\begin{array}{c}\epsilon_i\\X_i\end{array}\!\right).
\end{eqnarray}
From (\ref{eq:joint1}) we see that the joint
distribution of $(Y_i,X_i)$ follows a (p+1)-variate normal distribution
with mean and covariance given by
\begin{eqnarray*}
\tilde{\mu}=(\beta^{T}\mu,\mu),\qquad \widetilde{\Sigma}=\left(\begin{array}{cc}\sigma^{2}\!+\!\beta^{T}\!\Sigma\beta&\beta^{T}\!\Sigma\\
\Sigma\beta&\Sigma\end{array}\right).
\end{eqnarray*}
The expression for the concentration matrix $\widetilde{K}=\widetilde{\Sigma}^{-1}$
can be derived by using the identity $\widetilde{\Sigma} \widetilde{K}=I$.
\end{proof}
\end{appendix}

\vspace{0.5cm}
{\bf Acknowledgements} {N.S. acknowledges financial support from Novartis International AG, Basel, Switzerland.} 

\bibliographystyle{spbasic}      

\end{document}